\pgfplotsset{compat=1.5}
\newenvironment{proof}{\noindent{\bf Proof : \ }}{\hfill$\Box$\par\medskip}
\newtheorem{theorem}{Theorem}
\newtheorem{corollary}[theorem]{Corollary}
\newtheorem{lemma}[theorem]{Lemma}
\newtheorem{claim}[theorem]{\bf Claim}
\newtheorem{observation}[theorem]{\bf Observation}
\newtheorem{example}[theorem]{Example}
\newenvironment{proofof}[1]{\begin{trivlist} \item {\bf Proof
#1:~~}}
  {\qed\end{trivlist}}
\renewenvironment{proofof}[1]{\par\medskip\noindent{\bf Proof of #1: \ }}{\hfill$\Box$\par\medskip}
\newcommand{\namedref}[2]{\hyperref[#2]{#1~\ref*{#2}}}
\newcommand{\thmlab}[1]{\label{thm:#1}}
\newcommand{\thmref}[1]{\namedref{Theorem}{thm:#1}}
\newcommand{\obslab}[1]{\label{obs:#1}}
\newcommand{\obsref}[1]{\namedref{Observation}{obs:#1}}
\newcommand{\lemlab}[1]{\label{lem:#1}}
\newcommand{\lemref}[1]{\namedref{Lemma}{lem:#1}}
\newcommand{\claimlab}[1]{\label{claim:#1}}
\newcommand{\claimref}[1]{\namedref{Claim}{claim:#1}}
\newcommand{\corlab}[1]{\label{cor:#1}}
\newcommand{\corref}[1]{\namedref{Corollary}{cor:#1}}
\newcommand{\seclab}[1]{\label{sec:#1}}
\newcommand{\secref}[1]{\namedref{Section}{sec:#1}}
\newcommand{\figlab}[1]{\label{fig:#1}}
\newcommand{\figref}[1]{\namedref{Figure}{fig:#1}}
\newenvironment{remindertheorem}[1]{\medskip \noindent {\bf Reminder of  #1.  }\em}{}
\newcommand{\COMMENTED}[1]{{}}
\newcommand{\HAM}[1]{\ensuremath{\mathsf{HAM}\left(#1\right)}}
\renewcommand{\gcd}[1]{\ensuremath{\mathsf{gcd}\left(#1\right)}}
\newcommand{\IND}{\ensuremath{\mathsf{IND}}}
\newcommand{\mdef}[1]{{\ensuremath{#1}}\xspace}  
\newcommand{\superscript}[1]{\ensuremath{^{\mbox{\tiny{\textit{#1}}}}}\xspace}
\def \th {\superscript{th}}     
\def \etal{{\it et~al.}}
\def \poly {\mdef{\text{poly}}}   
\renewcommand{\O}[1]{\ensuremath{\mathcal{O}\left(#1\right)}}						
\newcommand{\ignore}[1]{}
\newif\ifnotes\notestrue 
\newcommand{\elena}[1]{\textcolor{red}{{\bf (Elena:} {#1}{\bf ) }} \marginpar{\tiny\bf
             \begin{minipage}[t]{0.5in}
               \raggedright E:
                \end{minipage}}}
\newcommand{\erfan}[1]{\textcolor{blue}{{\bf (Erfan:} {#1}{\bf ) }} \marginpar{\tiny\bf
             \begin{minipage}[t]{0.5in}
               \raggedright E:
                \end{minipage}}}
\newcommand{\samson}[1]{\textcolor{green}{{\bf (Samson:} {#1}{\bf ) }} \marginpar{\tiny\bf
             \begin{minipage}[t]{0.5in}
               \raggedright S:
            \end{minipage}}}
\newcommand{\funda}[1]{\textcolor{purple}{{\bf (Funda:} {#1}{\bf ) }} \marginpar{\tiny\bf
             \begin{minipage}[t]{0.5in}
               \raggedright F:
            \end{minipage}}}            																										
\newcommand{\elena}[1]{}
\newcommand{\erfan}[1]{}
\newcommand{\samson}[1]{}
\newcommand{\funda}[1]{}
\definecolor{mahogany}{rgb}{0.75, 0.25, 0.0}
\definecolor{darkblue}{rgb}{0.0, 0.0, 0.55}
\definecolor{darkpastelgreen}{rgb}{0.01, 0.75, 0.24}
\definecolor{darkgreen}{rgb}{0.0, 0.2, 0.13}
\definecolor{darkgoldenrod}{rgb}{0.72, 0.53, 0.04}
\definecolor{darkred}{rgb}{0.55, 0.0, 0.0}
\definecolor{forestgreen}{rgb}{0.13, 0.55, 0.13}
\renewcommand*{\@fnsymbol}[1]{\textcolor{mahogany}{\ensuremath{\ifcase#1\or *\or \dagger\or \ddagger\or
 \mathsection\or \triangledown\or \mathparagraph\or \|\or **\or \dagger\dagger
   \or \ddagger\ddagger \else\@ctrerr\fi}}}
\providecommand{\email}[1]{\href{mailto:#1}{\nolinkurl{#1}\xspace}}
\title{Streaming Periodicity with Mismatches\footnote{
A preliminary version of this paper is to appear in the Proceedings of the 21st International Workshop on Randomization and Computation (RANDOM 2017)
}}
\author{
Funda Erg{\"{u}}n\thanks{School of Informatics and Computing, Indiana University, Bloomington, IN.
Research supported by NSF CCF-1619081.
Email: \email{fergun@indiana.edu}.}
\and
Elena Grigorescu\thanks{Department of Computer Science, Purdue University, West Lafayette, IN. 
Research supported by NSF CCF-1649515.
Email: \email{elena-g@purdue.edu}.}
\and
Erfan Sadeqi Azer\thanks{School of Informatics and Computing, Indiana University, Bloomington, IN.
Email: \email{esadeqia@indiana.edu}.}
\and
Samson Zhou\thanks{Department of Computer Science, Purdue University, West Lafayette, IN. 
Research supported by NSF CCF-1649515. 
Email: \email{samsonzhou@gmail.com}.}
}
\begin{document}
\maketitle
\newcommand{\figtchj}{
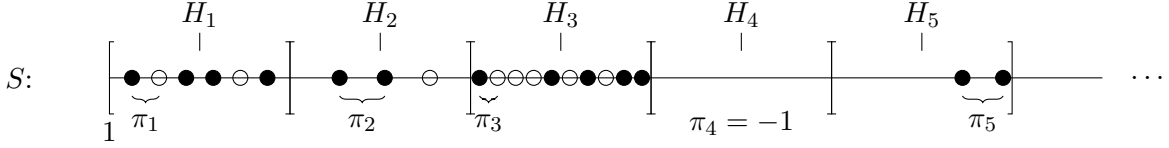
\begin{figure*}[htb]
\centering
\begin{tikzpicture}[scale=0.6]

\draw (-10,0) -- (12,0);
\node at (13,0){$\ldots$};
\node at (-12,0){$S$:};
\node at (-10,-1.2){$1$};
\foreach \x in {-10,-6,...,6}{
	\draw (\x+0.1,0.8) -- (\x,0.8) -- (\x,-0.8) -- (\x+0.1,-0.8);
	\draw (\x+3.9,0.8) -- (\x+4,0.8) -- (\x+4,-0.8) -- (\x+3.9,-0.8);
	\draw (\x+2,0.6) -- (\x+2,1);
}
\node at (-8,1.4){$H_1$};
\node at (-4,1.4){$H_2$};
\node at (0,1.4){$H_3$};
\node at (4,1.4){$H_4$};
\node at (8,1.4){$H_5$};

\node[draw,circle,inner sep=2pt,fill] at (-9.5,0) {};
\draw[decorate,decoration={brace,mirror}](-9.5,-0.4cm) -- (-8.9cm,-0.4cm);
\node at (-9.2,-1){$\pi_1$};
\node[draw,circle,inner sep=2pt] at (-8.9,0) {};
\node[draw,circle,inner sep=2pt,fill] at (-8.3,0) {};
\node[draw,circle,inner sep=2pt,fill] at (-7.7,0) {};
\node[draw,circle,inner sep=2pt] at (-7.1,0) {};
\node[draw,circle,inner sep=2pt,fill] at (-6.5,0) {};

\node[draw,circle,inner sep=2pt,fill] at (-4.9,0) {};
\draw[decorate,decoration={brace,mirror}](-4.9,-0.4cm) -- (-3.9cm,-0.4cm);
\node at (-4.4,-1){$\pi_2$};
\node[draw,circle,inner sep=2pt,fill] at (-3.9,0) {};
\node[draw,circle,inner sep=2pt] at (-2.9,0) {};

\node[draw,circle,inner sep=2pt,fill] at (-1.8,0) {};
\draw[decorate,decoration={brace,mirror}](-1.8,-0.4cm) -- (-1.4cm,-0.4cm);
\node at (-1.6,-1){$\pi_3$};
\node[draw,circle,inner sep=2pt] at (-1.4,0) {};
\node[draw,circle,inner sep=2pt] at (-1,0) {};
\node[draw,circle,inner sep=2pt] at (-0.6,0) {};
\node[draw,circle,inner sep=2pt,fill] at (-0.2,0) {};
\node[draw,circle,inner sep=2pt] at (0.2,0) {};
\node[draw,circle,inner sep=2pt,fill] at (0.6,0) {};
\node[draw,circle,inner sep=2pt] at (1,0) {};
\node[draw,circle,inner sep=2pt,fill] at (1.4,0) {};
\node[draw,circle,inner sep=2pt,fill] at (1.8,0) {};

\node at (4,-1){$\pi_4=-1$};

\node[draw,circle,inner sep=2pt,fill] at (8.9,0) {};
\draw[decorate,decoration={brace,mirror}](8.9,-0.4cm) -- (9.8cm,-0.4cm);
\node at (9.35,-1){$\pi_5$};
\node[draw,circle,inner sep=2pt,fill] at (9.8,0) {};

\end{tikzpicture}
\caption{Observe that all dots in each interval are equally spaced after the first. These dots represent $\mathcal{T}^c$: the black dots represent $\mathcal{T}$, while the white dots are added to convert the irregularly spaced black dots into regularly spaced dot sequences.}\figlab{fig:tchj}
\end{figure*}
}

\newcommand{\figcomparison}{
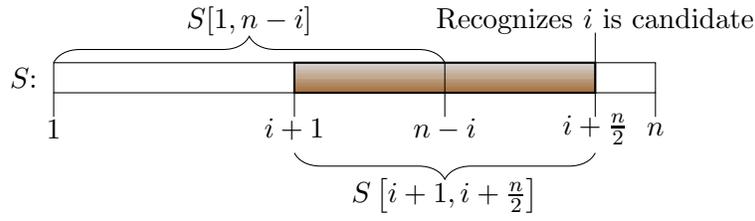
\begin{figure*}[htb]
\centering
\begin{tikzpicture}[scale=0.4]

\draw (0cm,0cm) rectangle+(8cm,1cm);
\filldraw[thick, top color=white,bottom color=brown] (8cm,0cm) rectangle+(10cm,1cm);
\draw (18cm,0cm) rectangle+(2cm,1cm);
\draw [decorate,decoration={brace,mirror,amplitude=10pt}](8cm,-2cm) -- (18cm,-2cm);
\node at (13cm, -3.4cm){$S\left[i+1,i+\frac{n}{2}\right]$};
\node at (-1cm, 0.5cm){$S$:};

\draw [decorate,decoration={brace,mirror,amplitude=10pt}](13cm,1cm) -- (0cm,1cm);
\node at (6.5cm, 2.4cm){$S[1,n-i]$};

\draw (0cm,0cm) -- (0cm,-0.8cm);
\node at (0cm, -1.2cm){$1$};

\draw (8cm,0cm) -- (8cm,-0.8cm);
\node at (8cm, -1.2cm){$i+1$};

\draw (13cm,1cm) -- (13cm,-0.8cm);
\node at (13cm, -1.2cm){$n-i$};

\draw (18cm,0cm) -- (18cm,-0.8cm);
\node at (18cm, -1.2cm){$i+\frac{n}{2}$};

\draw (18cm,1cm) -- (18cm,2cm);
\node at (18cm, 2.4cm){Recognizes $i$ is candidate};

\draw (20cm,0cm) -- (20cm,-0.8cm);
\node at (20cm, -1.2cm){$n$};

\end{tikzpicture}
\caption{When $i$ is recognized as a candidate, the algorithm has already passed $n-i$ and cannot build $S[1,n-i]$.}\figlab{fig:comparison}
\end{figure*}
}

\newcommand{\figlattice}{
\begin{figure*}[htb]
\centering
\begin{tikzpicture}[scale=0.7]
    \foreach \x in {-4,-3}{
      \foreach \y in {-4.8,-3.6,...,4.8}{
        \node[draw,circle,inner sep=2pt,fill] at (\x,\y) {};
      }
    }
		\foreach \x in {4}{
      \foreach \y in {-4.8,-3.6,...,4.8}{
        \node[draw,circle,inner sep=2pt,fill] at (\x,\y) {};
      }
    }
		\foreach \x in {-3,-2,...,3}{
      \foreach \y in {3.6,2.4,...,-3.6}{
        \node[draw,circle,inner sep=2pt] at (\x,\y) {};
      }
			\foreach \y in {-3.6,-4.8}{
			        \node[draw,circle,inner sep=2pt,fill] at (\x,\y) {};
      }
			\foreach \y in {4.8}{
			        \node[draw,circle,inner sep=2pt,fill] at (\x,\y) {};
      }
    }
\node at (-0.05,0)[below right]{\tiny{$i$}};
\node at (0.95,0)[below right]{\tiny{$i+q$}};
\node at (-1.05,0)[below right]{\tiny{$i-q$}};
\node at (-0.05,1.2)[below right]{\tiny{$i+p$}};
\node at (-0.05,-1.2)[below right]{\tiny{$i-p$}};

\foreach \x in {-2,-1,...,3}{
	\draw (\x,3.75)[line width = 1.5] -- (\x,4.8);
	\draw (\x,-2.55)[line width = 1.5] -- (\x,-3.6);
	\foreach \y in {3.6, 2.4,...,-2.4}{
		\draw(\x,\y-0.2)[dashed] -- (\x,\y-1);
	}
	\draw(\x,-3.8)[dashed] -- (\x,-4.6);
}
\foreach \x in {-4,-3}{
	\foreach \y in {4.8, 3.6,...,-3.6}{
		\draw(\x,\y-0.2)[dashed] -- (\x,\y-1);
	}
}
\foreach \y in {4.8, 3.6,...,-3.6}{
		\draw(4,\y-0.2)[dashed] -- (4,\y-1);
}

\foreach \y in {-2.4,-1.2,...,3.6}{
	\draw (-3,\y)[line width = 1.5] -- (-2.15,\y);
	\draw (3.15,\y)[line width = 1.5] -- (4,\y);
	\foreach \x in {-3, -2,...,3}{
		\draw(\x+0.2,\y)[dashed] -- (\x+0.8,\y);
	}
	\draw(-3.8,\y)[dashed] -- (-3.2,\y);
}

\foreach \y in {-4.8,-3.6}{
	\foreach \x in {3,2,...,-4}{
		\draw(\x+0.2,\y)[dashed] -- (\x+0.8,\y);
	}
}
\foreach \x in {3,2,...,-4}{
		\draw(\x+0.2,4.8)[dashed] -- (\x+0.8,4.8);
}

\end{tikzpicture}
\caption{The dashed lines are good edges and the solid lines are bad edges. Note that it is impossible to go from an isolated (light) node to one outside the the enclosed region (i.e., to a dark node) without traversing through a bad edge. The total number of enclosed edges can be at most $k^2$ if the number of bad edges is at most $4k$.}\figlab{fig:lattice}
\end{figure*}
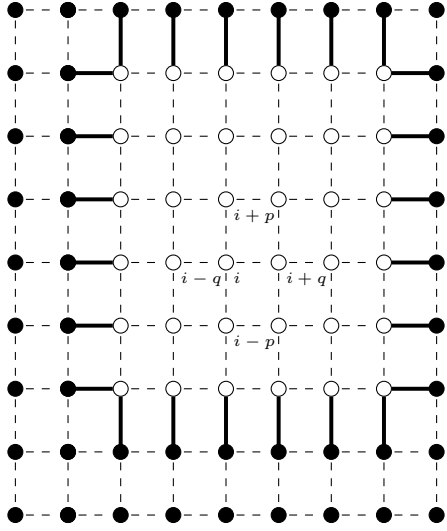
}

\newcommand{\figsidelattice}{
\begin{figure*}[htb]
\centering
\begin{tikzpicture}[scale=0.6]
    \foreach \x in {2,3,4}{
      \foreach \y in {-4.8,-3.6,...,4.8}{
        \node[draw,circle,inner sep=2pt,fill] at (\x,\y) {};
      }
    }
		\foreach \x in {-4,-3,...,1}{
      \foreach \y in {-4.8,-3.6,...,2.4}{
        \node[draw,circle,inner sep=2pt] at (\x,\y) {};
      }
			\foreach \y in {3.6,4.8}{
        \node[draw,circle,inner sep=2pt,fill] at (\x,\y) {};
      }
    }
\node at (-4,-5)[below]{\tiny{$i$}};
\node at (-3,-5)[below]{\tiny{$i+q$}};
\node at (-1.8,-5)[below]{\tiny{$i+2q$}};
\node at (-4.2,-3.6)[left]{\tiny{$i+p$}};
\node at (-4.2,-2.4)[left]{\tiny{$i+2p$}};

\foreach \x in {-4,-3,...,1}{
 \draw (\x,3.6)[dashed] -- (\x,4.8);
 \draw (\x,2.57)[line width=1.2] -- (\x,3.6);
	\foreach \y in {3.6,2.4,...,-4.8}{
	\draw (\x,\y-0.2)[dashed] -- (\x,\y-1);
	}
}
\foreach \x in {2,3,4}{
	\foreach \y in {4.8,3.6,...,-3.6}{
	\draw (\x,\y-0.2)[dashed] -- (\x,\y-1);
	}
}

\foreach \y in {-4.8,-3.6,...,1.2,2.4}{
 \draw (3,\y)[dashed] -- (4,\y);
 \draw (2,\y)[dashed] -- (3,\y);
 \draw (1.17,\y)[line width=1.2] -- (2,\y);
	\foreach \x in {-4,-3,...,0}{
	\draw (\x+0.2,\y)[dashed] -- (\x+0.8,\y);
	}
	}
	
\foreach \y in {3.6,4.8}{
	\foreach \x in {-4,-3,...,3}{
	\draw (\x,\y)[dashed] -- (\x+1,\y);
	}
}

\end{tikzpicture}
\caption{The dashed lines are good edges and the solid lines are bad edges. Part of the boundary of the enclosed points is induced by the boundary of the grid. The total area of the enclosed regions is at most $k^2$ if the perimeter of the bad edges is at most $2k$.}\figlab{fig:side:lattice}
\end{figure*}
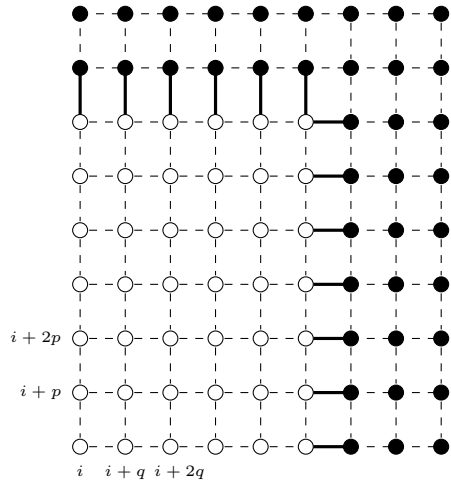
}

\newcommand{\figcube}{
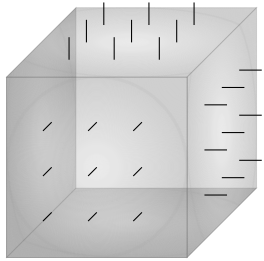
\begin{figure*}[htb]
\centering
\begin{tikzpicture}[scale=0.6]
\filldraw[shading=radial,inner color=white, outer color=gray!75, opacity=0.1](0,0,0) -- (4,0,0) -- (4,0,4) -- (0,0,4) -- (0,0,0);
\filldraw[shading=radial,inner color=white, outer color=gray!75, opacity=0.1](0,0,0) -- (0,4,0) -- (0,4,4) -- (0,0,4) -- (0,0,0);
\filldraw[shading=radial,inner color=white, outer color=gray!75, opacity=0.1](0,0,4) -- (4,0,4) -- (4,4,4) -- (0,4,4) -- (0,0,4);
\filldraw[shading=radial,inner color=white, outer color=gray!75, opacity=0.1](0,4,0) -- (4,4,0) -- (4,4,4) -- (0,4,4) -- (0,4,0);
\filldraw[shading=radial,inner color=white, outer color=gray!75, opacity=0.1](4,0,0) -- (4,4,0) -- (4,4,4) -- (4,0,4) -- (4,0,0);

\draw (4,1,3) -- (4.5,1,3);
\draw (4,2,3) -- (4.5,2,3);
\draw (4,3,3) -- (4.5,3,3);
\draw (4,1,2) -- (4.5,1,2);
\draw (4,2,2) -- (4.5,2,2);
\draw (4,3,2) -- (4.5,3,2);
\draw (4,1,1) -- (4.5,1,1);
\draw (4,2,1) -- (4.5,2,1);
\draw (4,3,1) -- (4.5,3,1);

\draw (1,1,4) -- (1,1,4.5);
\draw (1,2,4) -- (1,2,4.5);
\draw (1,3,4) -- (1,3,4.5);
\draw (2,1,4) -- (2,1,4.5);
\draw (2,2,4) -- (2,2,4.5);
\draw (2,3,4) -- (2,3,4.5);
\draw (3,1,4) -- (3,1,4.5);
\draw (3,2,4) -- (3,2,4.5);
\draw (3,3,4) -- (3,3,4.5);

\draw (1,4,1) -- (1,4.5,1);
\draw (1,4,2) -- (1,4.5,2);
\draw (1,4,3) -- (1,4.5,3);
\draw (2,4,1) -- (2,4.5,1);
\draw (2,4,2) -- (2,4.5,2);
\draw (2,4,3) -- (2,4.5,3);
\draw (3,4,1) -- (3,4.5,1);
\draw (3,4,2) -- (3,4.5,2);
\draw (3,4,3) -- (3,4.5,3);
\end{tikzpicture}
\caption{An enclosed region and the bad edges incident to the surface.}\figlab{fig:cube}
\end{figure*}
}
\begin{abstract}
We study the problem of finding all $k$-periods of a length-$n$ string $S$, presented as a data stream. $S$ is said to have $k$-period $p$ if its prefix of length $n-p$  differs from its suffix of length $n-p$  in at most $k$ locations.

We give a one-pass streaming algorithm that computes the $k$-periods of a string $S$ using $\poly(k, \log n)$ bits of space, for $k$-periods of length at most $\frac{n}{2}$. 
We also present a two-pass streaming algorithm that computes $k$-periods of $S$ using $\poly(k, \log n)$  bits of space, regardless of period length. We complement these results with comparable lower bounds.
\end{abstract}

\section{Introduction}
In this paper we are interested in finding (possibly imperfect) periodic trends in sequences given as streams.
Informally, a sequence is said to be \emph{periodic} if it consists of repetitions of a block of characters; e.g.,
$abcabcabc$ consists of repetitions of $abc$, of length 3, and thus has period 3.
The study of periodic patterns in sequences is valuable in fields such as string algorithms, time series data mining, and computational biology. 
The question of finding the smallest period of a string is a fundamental building block for many string algorithms, especially in pattern matching, such as the classic Knuth-Morris-Pratt \cite{KnuthMP77} algorithm. 
The general technique for many pattern matching algorithms is to find the periods of prefixes of the pattern in a preprocessing stage, then use them as a guide for ruling out locations where the pattern cannot occur, thus improving
efficiency.

While finding exact periods is  fundamental to pattern matching, in real life, it is unrealistic to expect data to be perfectly periodic.
In this paper, we assume that even when there is a fixed period, data might subtly change over time.
In particular, we might see \emph{mismatches,} defined as locations in the sequence where a block is not the same as the previous block.
For instance, while $abababababab$ is perfectly periodic, $abababadadad$ contains one mismatch where $ab$ becomes (and stays) $ad$. 
This model captures periodic events that undergo permanent modifications over time (e.g., statistics that remain generally cyclic but experience infrequent permanent changes or errors). 
We consider our problem in the \emph{streaming} setting, where the input is received in a sequential manner, and is processed using sublinear space.  

Our problem generalizes exact periodicity studied in \cite{ErgunJS10}, where the authors give a one-pass, $\O{\log^2n}$-space algorithm for finding the smallest \emph{exact} period of stream $S$ of length $n$, when the period is at most $n/2$, as well as a linear space lower bound when the period is longer than $n/2$. 
They use two standard and equivalent definitions of periodicity: $S$ has period $p$ if it is of the form $B^{\ell}B'$ where $B$ is a block of length $p$ that appears $\ell\geq 1$ times in a row, and $B'$ is a prefix of $B$. 
For instance, $abcabcabcab$ has period 3 where $B = abc$, and $B'=ab$. 
Equivalently, the length $n-p$ prefix of $S$ is identical to its length $n-p$ suffix. 
These definitions imply that at most $k$ of the repeating blocks differ from the preceding ones. 
According to this definition, for instance, $abcabdabdae$ is $2$-periodic with period 3, with the mismatches occurring at positions 6 and 11.
 
In order to allow mismatches in $S$ while looking for periodicity in small space, we utilize the fingerprint data structure introduced for pattern matching with mismatches by \cite{PoratP09,CliffordFPSS16}. 
Ideally, one would hope to combine results from \cite{ErgunJS10} and \cite{CliffordFPSS16} to readily obtain an algorithm for detecting $k$-periodicity. 
Unfortunately, reasonably direct combinations of these techniques do not seem to work. 
This is due to the fact that, in the presence of mismatches, the essential structural properties
of periods break down. 
For instance, in the exact setting, if $S$ has periods $p$ and $q$, it must also have period $r$, where $r$ is any positive multiple of $p$ or $q$. 
It must also have period $d=gcd(p, q)$. 
These are not necessarily true when there are mismatches; as an example consider the following.
\begin{example}
$S=aaaaba$ has only one mismatch where $S[i]\neq S[i+2]$ (over all non range-violating values of $i$); likewise where $S[i]\neq S[i+3]$, thus $S$ is $1$-periodic with periods 2 and 3. $S$ is \emph{not} 
$1$-periodic with period $1=\gcd{2,3}$  as it has \emph{two} mismatches where 
$S[i]\neq S[i+1].$
\end{example}
In the exact setting the smallest period $t$ determines the entire structure of $S$ as all other periods must be multiples of $t$.
This property does not necessarily hold when we allow mismatches, thus the smallest period does not carry as much information as
in the exact case. 
Similarly, overlaps of a pattern with itself in $S$ exhibits a much less well-defined periodic structure in the presence of mismatches. 
This makes it much harder to achieve the fundamental space reduction achievable in exact periodicity computation, where this kind of structure is crucially exploited.

\subsection{Our Results}
Given the structural challenges introduced by the presence of mismatches, we first focus on understanding the unique structural properties of $k$-periods and the relationship between the period $p$, and  the number of mismatches $k$ (See \thmref{thm:final}).
This understanding gives us tools for ``compressing'' our data into sublinear space. We proceed to present the following
on a given stream $S$ of length $n$:
\begin{enumerate}
\item a two-pass streaming algorithm that computes all $k$-periods of $S$ using $\O{k^4\log^9 n}$ space, {\em regardless of period length} (see \secref{sec:twopass})
\item a one-pass streaming algorithm that computes all $k$-periods of length at most $n/2$ of $S$ using $\O{k^4\log^9 n}$ space (see \secref{sec:onepass})
\item  a lower bound that  any one-pass streaming algorithm that computes all $k$-periods of $S$ requires $\Omega(n)$ space (see \secref{sec:lb})
\item  a lower bound that for $k =o(\sqrt{n})$ with $k>2$, any one-pass streaming algorithm
that computes all $k$-periods of $S$ with probability at least $1-1/n$ requires
$\Omega(k \log n)$ space, even under the promise that the $k$-periods are of
length at most $n/2$. (see \secref{sec:lb})
\end{enumerate}
Given the above results, it is trivial to modify the algorithms to return, rather
than all $k$-periods, the smallest, largest, or any particular $k$-period of $S$.
\subsection{Related Work}
Our work extends two natural directions in sublinear algorithms for strings: on one hand the study of the repetitive structure of long strings, and on the other hand the notion of approximate matching of patterns, in which the algorithm can detect a pattern even when some of it got corrupted.
 
In the first line of work, Erg{\"{u}}n \etal\ \cite{ErgunJS10} initiate the study of streaming algorithms for detecting the period of a string, using $poly(\log n)$ bits of space. 
Indyk \etal\ \cite{KoudasIM00} also studied mining periodic patterns in streams,  \cite{ElfekyAE06} studied periodicity in time-series databases and online data, and Crouch and McGregor \cite{CrouchM11} study periodicity via linear sketches.
\cite{ErgunMS10} and \cite{LachishN11} studied the problem of distinguishing periodic strings from aperiodic ones in the property testing model of sublinear-time computation. 
Furthermore, \cite{amir2010approximate} studied approximate periodicity in RAM model under the Hamming and swap distance metrics. 
 
The pattern matching literature is a vast area (see \cite{ApostolicoG:1997} for a survey) with many variants.
Following the pattern matching streaming algorithm of Porat and Porat \cite{PoratP09}, Clifford \etal\ \cite{CliffordFPSS16} recently show improved streaming algorithms for the $k$-mismatch problem, as well as offline and online variants. 
We adapt the use of sketches from \cite{CliffordFPSS16} though there are some other works with different sketches for strings (\cite{andoni2013homomorphic}, \cite{clifford2009coding}, \cite{radoszewski2016streaming} and \cite{porat2007improved}). 
\cite{clifford2013space} also showed several lower bounds for online pattern matching problem. 
 
This line of work is also related to the detection of other natural patterns in strings, such as palindromes or near palindromes. 
Erg{\"{u}}n \etal\ \cite{BerenbrinkEMA14} initiate the study of this problem and give sublinear-space algorithms, while \cite{GawrychowskiMSU16} show lower bounds. 
In recent work, \cite{GrigorescuSZ17} extend this problem to finding near-palindromes (i.e., palindromes with possibly a few corrupted entries). 
 
Many ideas used in these sublinear algorithms stem from related work in the classical offline model. 
The well-known KMP algorithm \cite{knuth1977fast} initially used periodic structures to search for patterns within a text. 
Galil \etal \cite{galil1983time} later improved the space performance of this pattern matching algorithm. 
Recently, \cite{gawrychowski2013optimal} also used the properties of periodic strings for pattern matching when the strings are compressed. 
These interesting properties have allowed several algorithms to satisfy some non-trivial requirements of respective models (see \cite{golan2016esa}, \cite{CliffordFPSS15} for example). 
\section{Preliminaries}
We assume our input is a stream $S[1,\ldots, n]$ of length $|S|=n$ over some alphabet $\Sigma$. The $i\th$ character of $S$ is denoted $S[i]$, and the substring between locations $i$ and $j$ (inclusive) $S[i,j]$.  
Two strings $S,T\in\Sigma^n$ are said to have a {\it mismatch} at index $i$ if $S[i]\neq T[i]$, 
and their Hamming distance is the number of such mismatches, denoted $\HAM{S,T}=\Big|\{ i\mid S[i]\ne T[i]\}\Big|$.
We denote the concatenation of $S$ and $T$ by $S\circ T$. 
 
$S$ is said to have {\it period} $p$ if $S[x]=S[x+p]$ for all $1\le x\le n-p$;
more succinctly, if $S[1,n-p]=S[p+1,n]$.
In general, we say $S$ has $k$-period $p$ (i.e., $S$ has period $p$ with $k$ mismatches) if $S[x]=S[x+p]$ for all but at most $k$ valid indices $x$. 
Equivalently, $S$ has $k$-period $p$ if and only if $\HAM{S[1,n-p],S[p+1,n]}\le k$. 
\begin{observation}
\obslab{obs:num:words}
If $p$ is a $k$-period of $S$, then at most $k$ of the sequence of substrings $S[1,p],S[p+1,2p],S[2p+1,3p],\ldots$ can differ from the previous substring in the sequence. 
\end{observation}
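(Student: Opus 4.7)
The plan is to unpack the definition of $k$-periodicity and then pigeonhole the mismatches into the block boundaries. By definition, $p$ being a $k$-period of $S$ is equivalent to
\[
\HAM{S[1,n-p],\,S[p+1,n]} \le k,
\]
so there are at most $k$ indices $x \in [1, n-p]$ where $S[x] \ne S[x+p]$. I will call each such $x$ a \emph{witness}.

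Next I would partition the witnesses according to which consecutive block pair they belong to. Write the sequence of substrings as $B_i = S[(i-1)p+1, ip]$ (with the last $B_i$ possibly shorter). For any index $x \in [(i-1)p+1, ip]$ with $x \le n-p$, the comparison $S[x]$ versus $S[x+p]$ is exactly the comparison of the corresponding position in $B_i$ to the corresponding position in $B_{i+1}$. Thus the witnesses in the range $[(i-1)p+1, ip]$ are precisely the positions at which $B_{i+1}$ differs from $B_i$.

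From here the claim is immediate: if $B_{i+1}$ differs from $B_i$ as strings, then the range $[(i-1)p+1, ip]$ must contain at least one witness. Since the total number of witnesses is at most $k$ and the witness ranges for distinct block boundaries are disjoint, at most $k$ of the differences $B_{i+1}\ne B_i$ can occur.

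I expect no real obstacle here; the only mild subtlety is the boundary case where the final block is shorter than $p$ (because $n$ need not be a multiple of $p$) or where the stream ends mid-block. In that case the relevant witness range is simply truncated to $[1, n-p]$, and the argument above still applies verbatim: a difference in the final (possibly partial) comparison forces at least one witness within its truncated range, and these truncated ranges remain disjoint across block boundaries. Hence the total count of differing consecutive block pairs is bounded by the total number of witnesses, which is at most $k$.
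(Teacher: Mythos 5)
Your argument is correct and is exactly the justification the paper intends: the paper states this as an unproved observation, and your pigeonhole of the at-most-$k$ mismatch witnesses into the disjoint block ranges $[(i-1)p+1, ip]$ is the standard way to see it. No gaps worth flagging; the partial-final-block caveat you mention is handled at the same (informal) level as the paper itself.
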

When obvious from the context, given $k$-period $p$, we denote as a {\it mismatch} a position $i$ for which $S[i]\neq S[i+p]$. 
\begin{example}
The string $S=aaaaaabbccd$ has $3$-period equal to $1$, since $S[i]=S[i+1]$ for all valid locations $i$ except mismatches at $i=6,8,10$. 
On the other hand, $S=abcabcadcabc$ has $2$-period equal to $3$ since $S[i]=S[i+3]$ for all valid $i$ except mismatches $i=5,8$.
\end{example}
The following observation notes that the number of mismatches between two strings is an upper bound on the number of mismatches between their prefixes of equal length.
\begin{observation}
\obslab{obs:num:mismatch}
If $p$ is a $k$-period of $S$, then for any $x\le n-p$, the number of mismatches between $S[1,x]$ and $S[p+1,p+x]$ is at most $k$. 
\end{observation}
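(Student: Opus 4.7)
The plan is to observe that the claim reduces to a simple monotonicity statement about Hamming distance under taking prefixes. By the definition of $k$-period given just before the observation, we have $\HAM{S[1,n-p], S[p+1,n]} \le k$. The set of mismatching positions between $S[1,x]$ and $S[p+1, p+x]$ is, by the alignment being the same shift by $p$, exactly the subset of mismatching positions $i \in \{1, \ldots, n-p\}$ for the pair $(S[1,n-p], S[p+1,n])$ that also satisfy $i \le x$. Since $x \le n-p$, this is a subset of the full mismatch set, so its size is at most $k$.

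Concretely, the short proof I would write is: let $M = \{ i : 1 \le i \le n-p,\ S[i] \ne S[i+p]\}$ and let $M_x = \{ i : 1 \le i \le x,\ S[i] \ne S[i+p]\}$. The hypothesis that $p$ is a $k$-period gives $|M| \le k$ by definition. The condition $x \le n-p$ guarantees $M_x \subseteq M$, so $|M_x| \le |M| \le k$. Finally, $|M_x|$ is exactly the Hamming distance $\HAM{S[1,x], S[p+1,p+x]}$, which is the quantity to be bounded.

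There is essentially no obstacle here: the observation is a direct consequence of the definition of $k$-period together with the fact that restricting the comparison to a prefix can only decrease the number of mismatches. The only care needed is to note that the range restriction $x \le n-p$ ensures the indices $i+p$ stay within $S$, so the prefix pair $(S[1,x], S[p+1,p+x])$ is well-defined and its mismatches are genuinely indexed by a subset of $\{1, \ldots, n-p\}$. No further structural facts about periods (such as the failure of the gcd property discussed in the example) are needed.
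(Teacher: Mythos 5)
Your proof is correct and matches the reasoning the paper leaves implicit: the observation is stated without proof precisely because the mismatch set of the length-$x$ prefixes is a subset of the mismatch set $\{i \le n-p : S[i]\ne S[i+p]\}$, whose size is at most $k$ by the definition of $k$-period. Your write-up simply makes this subset argument explicit, so there is nothing to add.
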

Given two integers $x$ and $y$, we denote their greatest common divisor by $\gcd{x,y}$.

We repeatedly use data structures and subroutines that use Karp-Rabin fingerprints. 
For more about the properties of Karp-Rabin fingerprints see \cite{KarpR87}, but for our purposes, the following suffice:
\begin{theorem}[\cite{CliffordFPSS16}]
\thmlab{thm:cfp:datastructure}
Given two strings $S$ and $T$ of length $n$, there exists a data structure that uses $\O{k\log^6 n}$ bits of space, and outputs whether $\HAM{S,T}>k$ or $\HAM{S,T}\le k$, along with the set of locations of the mismatches in the latter case.
\end{theorem}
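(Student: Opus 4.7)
The plan is to build a hierarchical Karp-Rabin fingerprint structure organized along a dyadic tree, combined with an algebraic $k$-sparse recovery scheme that extracts the positions of at most $k$ differing blocks at each level.

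First, at each of the $\O{\log n}$ dyadic levels $\ell$, I would partition $S$ into aligned blocks of length $2^\ell$ and compute a Karp-Rabin fingerprint of each of the $n/2^\ell$ blocks. Rather than storing all these fingerprints explicitly --- which would already cost $\Omega(n)$ bits at the bottom level --- I would store only a $k$-sparse recovery sketch of the fingerprint sequence at each level, for example $\O{k}$ evaluations of a Reed-Solomon-style polynomial whose coefficients are the fingerprints. Each such sketch uses $\O{k \polylog n}$ bits, and the same construction is carried out for $T$.

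To answer a query, I would subtract the sketches of $S$ and $T$ level by level. Because a mismatch at position $i$ belongs to exactly one block per level, $\HAM{S,T} \le k$ forces at most $k$ blocks to differ at every level, so the sparse-recovery decoder returns their indices exactly. If any level's decoder reports more than $k$ differing blocks, I would output $\HAM{S,T} > k$. Otherwise, I would descend recursively into each differing block and repeat, bottoming out at individual characters whose positions constitute the mismatch locations. Karp-Rabin fingerprints match equal substrings deterministically and collide on unequal substrings with probability $n^{-c}$ for a tunable constant $c$; a union bound over the $\O{k \log n}$ fingerprint comparisons controls the overall failure probability, and the claimed $\O{k \log^6 n}$ space follows from compounding the number of levels, the fingerprint precision, the sparse-recovery decoding overhead, and probability amplification.

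The main obstacle will be supporting sparse recovery not just for the full sequence of blocks at each level, but for any aligned dyadic sub-interval encountered during the recursive descent, without multiplying the space by the number of such sub-intervals. I expect to organize the stored sketches along the dyadic tree itself and exploit the linearity of Reed-Solomon-style sketches, so that the sketch of any aligned sub-interval can be obtained by taking an appropriate linear combination of a few precomputed sketches, thereby keeping the total space within the claimed bound.
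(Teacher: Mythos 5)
The paper does not actually prove this statement; it imports it wholesale from \cite{CliffordFPSS16}, and the only glimpse of the underlying construction appears later, in the proof of \corref{cor:cfp:algorithm}: positions are partitioned into subsequences according to their residues modulo $\O{\log n}$ predetermined random primes, so that with high probability every mismatch is \emph{isolated} (is the unique mismatch) in at least one subsequence, and a ``one-mismatch'' fingerprint gadget on each subsequence then extracts that mismatch's exact location. Your proposal takes a genuinely different route: a dyadic tree of block fingerprints combined with a linear algebraic $k$-sparse recovery sketch of the fingerprint sequence at each level. This is a legitimate alternative --- the difference of the two fingerprint sequences at any level is indeed $k$-sparse whenever $\HAM{S,T}\le k$, so Reed--Solomon-style syndrome decoding recovers the differing blocks --- and it in fact anticipates the later, tighter $\O{k\log n}$-bit mismatch sketches built directly on Reed--Solomon syndromes of the character sequence. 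What the prime-residue approach buys is that isolation reduces everything to the trivial ``exactly one mismatch'' case; what yours buys is linearity and a cleaner decoding story.

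Two points need repair, one of which is a real gap as written. First, your rule ``if any level's decoder reports more than $k$ differing blocks, output $\HAM{S,T}>k$'' does not work: a $2k$-evaluation syndrome sketch cannot certify sparsity, and when the difference vector has more than $k$ nonzeros the decoder may silently return an arbitrary $k$-sparse vector rather than announce failure. You must add an explicit verification step --- e.g., check that the decoded block differences at level $\ell$ are consistent with those at level $\ell+1$ all the way up to the root, where the comparison is a single sound fingerprint equality test, or verify the decoded support against an independent fingerprint of the difference. Second, the ``main obstacle'' you identify at the end is illusory: since each mismatch position lies in exactly one block per level, $\HAM{S,T}\le k$ bounds the number of differing blocks at \emph{every} level globally by $k$, so one global sketch per level suffices and no sub-interval sketches are ever needed; indeed, the recursion can be dispensed with entirely by running the sparse recovery directly at the character level.
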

From here, we use the term \emph{fingerprint} to refer to this data structure.

\subsection{The $k$-Mismatch Algorithm}
For our string-matching tasks, we utilize an algorithm from \cite{CliffordFPSS16},
whose parameters are given in \thmref{thm:cfp:algorithm}. For us, string matching is a tool rather than a goal; as a result, we require additional properties from the algorithm that are not obvious at first glance. In \corref{cor:cfp:algorithm} we consider these properties.
Throughout our algorithms and proofs, we frequently refer to this algorithm  as the \emph{$k$-Mismatch Algorithm}.

\begin{theorem}[\cite{CliffordFPSS16}]
\thmlab{thm:cfp:algorithm}
Given a pattern $P$ of length $\ell$, a text $T$ of length $n$ and some mismatch threshold $k$, there exists an algorithm that, with probability $1-\frac{1}{n^2}$, outputs all indices $i$ such that $\HAM{T[i,i+\ell-1],P}\le k$ using $\O{k^2\log^8 n}$ bits of space.
\end{theorem}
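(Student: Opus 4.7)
The plan is to combine the $k$-mismatch fingerprint data structure of \thmref{cfp:datastructure} with a multi-level filtering scheme in the spirit of the Porat--Porat exact-matching algorithm. I would partition $P$ into $\O{\log \ell}$ nested prefixes $P_0,P_1,\ldots,P_{\lceil \log \ell\rceil}$, where $P_j$ is the prefix of $P$ of length $2^j$, and precompute a $k$-mismatch fingerprint from \thmref{cfp:datastructure} for each $P_j$. As $T$ streams in, I would maintain at each level $j$ a set $C_j$ of candidate starting positions at which the prefix $P_j$ has been verified to match with at most $k$ mismatches: a position $i$ enters $C_0$ trivially, and once $T[i,i+2^{j+1}-1]$ has been fully read, I query the stored fingerprint for $P_{j+1}$ to decide whether $i$ is promoted to $C_{j+1}$ or discarded. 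The positions that survive to the top level are exactly the $k$-mismatch occurrences of $P$ in $T$.

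The heart of the argument is to bound $|C_j|$ uniformly at every level. The key combinatorial lemma I would prove is a $k$-mismatch analogue of Fine--Wilf: if $P_j$ has $k$-mismatch occurrences at positions $i_1<i_2<\cdots<i_r$ within any window of length $2^{j+1}$, then either $r=\O{k}$, or the consecutive differences $i_{s+1}-i_s$ share a common value $p$ that is a $2k$-period of $P_j$. In the sparse case I store the candidates explicitly, and in the periodic case I encode them as an arithmetic progression together with $\O{k}$ auxiliary mismatch locations, costing $\O{k\log n}$ bits of bookkeeping beyond the fingerprints themselves. This is where I expect the real difficulty to lie, since the periodic structure is only approximate and both branches of the dichotomy have to be defended carefully, particularly when propagating a candidate from level $j$ to level $j+1$ without re-reading the stream.

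Assuming this candidate bound, the rest is accounting. Each level stores $\O{k}$ compressed candidates and one $\O{k\log^6 n}$-bit fingerprint; summing over the $\O{\log \ell}$ levels yields total space $\O{k^2\log^8 n}$. For correctness, I tune the failure probability of the fingerprint from \thmref{cfp:datastructure} so that each of the $\O{n\log n}$ individual comparisons performed during the stream is correct with sufficiently high probability, and a union bound then upgrades the per-query guarantee to the claimed $1-1/n^2$ overall success probability.
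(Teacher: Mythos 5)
This is \thmref{thm:cfp:algorithm}, which the paper imports verbatim from \cite{CliffordFPSS16} and never proves; the only argument the paper supplies in this vicinity is the proof of \corref{cor:cfp:algorithm}, which takes the theorem as a black box and merely argues that its components (the prime-based subpattern decomposition, the dictionary matching, the run-length encoding, and the approximation step) can be run online when the pattern is a prefix of the text. So there is no internal proof to compare yours against; what you have written is a from-scratch reconstruction of the cited result, and it captures the correct high-level architecture (doubling prefixes, level-by-level promotion of candidates, a sparse-versus-periodic dichotomy) but leaves the genuinely hard parts as assumptions.

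Two concrete gaps. First, admitting every position into $C_0$ ``trivially'' makes $|C_0|=n$, and more generally your per-window bound of $\O{k}$ compressed candidates at level $j$ still yields $\Theta(nk/2^j)$ candidates alive across the whole text at the low levels; you need the additional (standard but essential) observation that a level-$j$ candidate is only active for $\O{2^j}$ text characters before it is promoted or killed, so that only $\O{1}$ windows per level are simultaneously live. Second, and more seriously, the periodic branch of your dichotomy is not mere bookkeeping: storing the arithmetic progression plus ``$\O{k}$ auxiliary mismatch locations'' does not by itself let you compute $\HAM{T[i,i+2^{j+1}-1],P_{j+1}}$ exactly for each candidate $i$ in the progression at promotion time. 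Recovering these Hamming distances is precisely where \cite{CliffordFPSS16} needs the run-length encoding of the (approximately periodic) subpatterns and the separate approximation algorithm, and it is also where the extra factors in the space bound come from --- your own accounting ($\O{\log n}$ levels, each holding $\O{k}$ candidates and one $\O{k\log^6 n}$-bit fingerprint) sums to $\O{k\log^7 n}$, not $\O{k^2\log^8 n}$, which signals that the sketch is not charging for the per-candidate fingerprint state the algorithm actually has to maintain.
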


Whereas the pattern in the $k$-Mismatch Algorithm is given in advance and can be preprocessed before the text, in our case the pattern is a prefix of
the text, and the algorithm must return any matches of this pattern, starting possibly
at location 2, well within the original occurrence of the pattern itself. (Consider text `abcdabcdabcdabcd' and the pattern `abcdabcd,' the first six characters of the text. The first match starts at location 4, but the algorithm does not finish reading the full pattern until it has read location 6.) To eliminate a potential problem due to this requirement, we make modifications so that the algorithm can search for all matches in $S$ of a 
prefix of $S$.

\newcommand{\corcfp}{Given a string $S$ and an index $x$, there exists an algorithm which, with probability $1-\frac{1}{n^2}$, outputs all indices $i$ where $\HAM{S[1,x],S[i+1,i+x]}\le k$ using $\O{k^2\log^8 n}$ bits of space.}
\begin{corollary}\corlab{cor:cfp:algorithm}
\corcfp
\end{corollary}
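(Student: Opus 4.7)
The sole obstacle to applying \thmref{cfp:algorithm} directly is that here the pattern $P = S[1,x]$ is itself a prefix of the text $S$, so $P$ is not available before the stream begins. I would exploit the coincidence $S[1,x]=P$ to bootstrap the algorithm: during the first $x$ positions I would run \thmref{cfp:algorithm}'s pattern-preprocessing on $S[1],\dots,S[x]$, producing the usual pattern sketch in $\O{k^2\log^8 n}$ bits. At time $t=x$, the length-$x$ sliding window that the text-processing side of the CFPSS algorithm would maintain is exactly $S[1,x]=P$, so its text-side sketch necessarily coincides with the pattern sketch just built; I would therefore initialize the text-side state by copying the pattern-side state, effectively replaying the first $x$ text characters into the algorithm without ever storing them, and report the trivial match $i=0$.

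From $t=x+1$ onward I would feed each new $S[t]$ to the text-processing side in the standard way: the window advances to $S[t-x+1,t]$, its sketch rolls forward, and the algorithm decides whether $\HAM{S[1,x],S[t-x+1,t]}\le k$; whenever the answer is yes I output $i=t-x$. Running this for every $t\le n$ examines each $i\in\{0,1,\dots,n-x\}$, so by \thmref{cfp:algorithm}'s correctness together with a union bound over the $O(n)$ random decisions, the output is exactly the required index set with probability $1-1/n^2$. The total space is dominated by the $\O{k^2\log^8 n}$ bits of \thmref{cfp:algorithm}.

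The main technical point to verify is that the text-side state of the CFPSS algorithm at time $t=x$ is a deterministic function of the current window's contents, and hence coincides with the pattern-side state built on the same $x$ characters. I expect this to hold because both sides of the CFPSS construction are built from the same rolling Karp--Rabin-style fingerprints with the same random parameters, so identical substrings yield identical sketches; in particular, at $t=x$ both sides have processed exactly $P$ and produce the same internal state. If this identification fails for some auxiliary bookkeeping variable, the fallback is to handle the ``overlap'' indices $i\in\{0,\dots,x-1\}$ separately by maintaining a single length-$x$ rolling fingerprint of the stream via \thmref{cfp:datastructure} and comparing it at each time $t\in\{x,\dots,2x-1\}$ to the fingerprint of $P$, while a fresh invocation of \thmref{cfp:algorithm} on text $S[x+1,n]$ with pattern $P$ (launched at time $x$) handles all $i\ge x$; this adds only $\O{k\log^6 n}$ bits on top of \thmref{cfp:algorithm}, preserving the $\O{k^2\log^8 n}$ bound.
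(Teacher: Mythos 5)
You have located the right obstacle --- the pattern $S[1,x]$ is a prefix of the text and is not available before the stream begins, while matches may start as early as position $2$ --- but your resolution does not go through. Your primary plan defers all text-side processing until time $x$ and then ``catches up'' by copying the pattern-side state into the text-side state, on the grounds that at time $x$ the current text window equals $P$. This presumes that the text side of the $k$-Mismatch Algorithm maintains a sketch of the current window that is a deterministic function of the window's contents. It does not: the text side runs dictionary-matching automata over subpatterns of $P$ (selected by residues modulo random primes) and accumulates, for each alignment $i$, evidence from subpattern occurrences as they complete over time. For an alignment $1\le i<x$, most of the window $S[i+1,i+x]$ lies before position $x$, so the subpattern occurrences needed to decide that alignment complete before time $x$; a text process that only starts at time $x$ has irrevocably missed them, and the pattern-side preprocessing (which only builds the dictionary) never computes them. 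So there is no correct state to copy.

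The fallback fares no better: a ``length-$x$ rolling fingerprint'' of the stream cannot be maintained in $\O{k\log^6 n}$ bits, because sliding the window forward requires cancelling the contribution of the outgoing character $S[t-x+1]$, which the algorithm no longer has; equivalently, recovering the sketch of $S[i+1,i+x]$ from that of the prefix $S[1,i+x]$ requires the sketch of $S[1,i]$ for every $i<x$, forcing $\Omega(x)$ sketches to be retained simultaneously. This sliding-window difficulty is precisely what the dictionary-matching machinery of \cite{CliffordFPSS16} exists to circumvent. The paper's argument is the opposite of yours: it starts the text-side processing at position $2$, in parallel with the pattern preprocessing, and justifies this by checking, component by component, that (i) the pattern data structure is built online, so that after each prefix of $P$ is read the corresponding dictionary entries are already in place and usable, and (ii) the text-side processing depends on the pattern only through randomness that can be fixed and shared in advance. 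Establishing (i) and (ii) is the substance of the proof, and it is the piece your argument is missing.
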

\begin{proof}
We claim that the algorithm of \thmref{thm:cfp:algorithm} can be arranged and modified to output all such indices $i$.
We need to input $S[1,x]$ as the pattern and $S[2,n]$ as the text for this algorithm.

Thus, it suffices to argue that the data structure for the pattern is built in an online fashion.
That is, after reading each symbol of the pattern, the data structure corresponding to the prefix of the pattern that has already been read is updated and ready to use.
Moreover, the process of building the data structure for the text should not depend on the pattern.
The only dependency between these two processes can be that they need to use the same randomness.
Therefore, the algorithm only needs to decide the randomness before starting to process the input and share it between processes.

The algorithm of \thmref{thm:cfp:algorithm} has a few components, explained in the proof of Theorem 1.2 in \cite{CliffordFPSS16}.
Here, we go through these components and explain how they satisfy the conditions we mentioned. 

The main data structure for this algorithm is also used in \thmref{thm:cfp:datastructure}. 
In this data structure, each symbol is partitioned to various subpatterns determined by the index of the symbol along with predetermined random primes. 
Each subpattern is then fed to a dictionary matching algorithm. 
The dictionary entries are exactly the subpatterns of the original patterns and thus can be updated online. 

The algorithm also needs to consider run-length encoding for each of these subpatterns in case they are highly periodic.
It is clear that run-length encoding can be done independently for the pattern and the text.

Finally the approximation algorithm (Theorem 1.3 of \cite{CliffordFPSS16}) uses a similar data structure to \thmref{thm:cfp:datastructure}, but with different magnitudes for primes.
Thus, the entire algorithm can be modified to run in an online fashion.
\end{proof}
\section{Our Approach}
Our approach to find all the $k$-periods of $S$ is to first determine a set $\mathcal{T}$ of candidate $k$-periods, which is guaranteed to be a superset of all the true $k$-periods.  
We first describe the algorithm to find the $k$-period in two passes.
In the first pass, we let $\mathcal{T}$ be the set of indices $\pi$ that satisfy 
$$\HAM{S[1,x],S[\pi+1,\pi+x]}\le k,$$
for some appropriate value of $x$ that we specify later. 
Note that by \obsref{obs:num:mismatch}, all $k$-periods must satisfy the above inequality.
We show that even though $\mathcal{T}$ may be linear in size, we can succinctly represent $\mathcal{T}$ by adding a few additional indices into $\mathcal{T}$. 
We then show how to use the compressed version of $\mathcal{T}$ during the second
pass to verify the candidates and output the true $k$-periods of $S$.

This strategy does not work if we are allowed only one pass; by the time we discover a candidate $k$-period $p$, it may be too late for us to start collecting the extra data needed to verify $p$ (in the two-pass version this is not a problem, as the extra pass allows us to go back to the start of $S$ and any needed data).
We approach this problem by utilizing a trick from \cite{ErgunJS10} of identifying candidate periods $p$ using non-uniform criteria depending on the value of $p$. 
Using this idea, once a candidate period is found, it is not too late to verify that it is a true $k$-period, and the data can still be compressed into sublinear size.

Perhaps the biggest hidden challenge in the above approach is due to the major structural differences between exactly periodic and $k$-periodic strings;  $k$-periodic strings show much less structure than exactly periodic strings. 
As a result, incremental adaptations of existing techniques on periodic strings do not yield corresponding schemes for
$k$-periodic strings. 
In order to achieve small space, one needs to explore the weaker structural properties of
$k$-periodic streams. 
A large part of the effort in this work is in formalizing said structure (see \secref{sec:structural}), culminating
in \thmref{thm:main} and its proof, as well as exploring its application to our algorithms.

To show lower bounds for randomized algorithms finding the smallest $k$-period, we use a strategy similar to that in \cite{ErgunJS10}, using a reduction from the Augmented Index Problem.
To show lower bounds for randomized algorithms finding the smallest $k$-period given the promise that the smallest $k$-period is at most $\frac{n}{2}$, we use Yao's Principle \cite{Yao77}. 

\section{Two-Pass Algorithm to Compute $k$-Periods}\seclab{sec:twopass}
In this section, we provide a two-pass, $\O{k^4\log^9 n}$-space algorithm to output all $k$-periods of $S$. 
The general approach is to first identify a superset of the $k$-periods of $S$, based on the self-similarity of $S$, detected via the $k$-Mismatch algorithm of \cite{CliffordFPSS16} as a black box. 
Unfortunately, while this tool allows us to match parts of $S$ to each other, we get only incomplete information about possible periods, and this information is not readily stored in small space due to insufficient structure. 
We explore the structure of periods with mismatches in order to come up with a technique that massages our data into a form that can be compressed in small space, and is easily uncompressed. 
During the second pass, we go over $S$ as well as the compressed data to verify the candidate periods. 

We consider two classes of periods by their length, and run two separate algorithms in parallel. 
The first algorithm identifies all $k$-periods $p$ with $p\le\frac{n}{2}$, while the second algorithm identifies all $k$-periods $p$ with $p>\frac{n}{2}$.

\subsection{Finding small $k$-periods}
Our algorithm for finding periods of length at most $n/2$ proceeds in two passes. 
In the first pass, we identify a set $\mathcal{T}$ of candidate $k$-periods, and formulate its compressed representation, $\mathcal{T}^C$.
In the second pass, we recover each index from $\mathcal{T}^C$ and verify whether or not it is a $k$-period.
We need $\mathcal{T}$ and $\mathcal{T}^C$ to satisfy four properties.
\begin{enumerate}
\item
\label{prop:1}
All true $k$-periods (likely accompanied by some candidate $k$-periods that are false positives) are in $\mathcal{T}$.

\item
\label{prop:2}
$\mathcal{T}^C$ can be stored in sublinear space. 

\item
\label{prop:3}
$\mathcal{T}$ can be fully recovered from $\mathcal{T}^C$ in small space.

\item
\label{prop:4}
The verification process in the second pass weeds out those
candidates that are not true periods in sublinear space.
\end{enumerate}
We now describe our approach and show how it satisfies the above properties.

\subsection{Pass 1: \hyperref[prop:1]{Property 1}.} We crucially observe that any $k$-period $p$ must satisfy the requirement
$$\HAM{S[1,x],S[p+1,p+x]}\le k$$
for all $x\le n-p$, and specifically for $x=\frac{n}{2}$. 
This observation allows us to refer to 
indices as periods, as the index $p+1$ where
the requirement is satisfied corresponds to (possible) $k$-period $p$.
For the remainder of this algorithm, we set $x=\frac{n}{2}$, and
designate the indices $p+1$ that satisfy the requirement with
 $x=\frac{n}{2}$ as candidate $k$-periods; collectively these indices serve as $\mathcal{T}$.
Since satisfying this requirement is necessary but not sufficient for a candidate to be a real $k$-period, \hyperref[prop:1]{Property 1} follows.

\subsection{Pass 1: \hyperref[prop:2]{Property 2}.}
Observe that $\mathcal{T}$ could be linear in size, so we cannot store each index explicitly. We observe that if our indices followed an arithmetic progression, they
could be kept implicitly in very succinct format (as is the case where there are
no mismatches). Unfortunately, due to the
presence of mismatches in $S$, such a regular structure does not happen. However, we
show that it is still possible to implicitly add a small number of extra indices to our candidates and end up with an arithmetic series and allow for succinct representation.
Our algorithm produces several such series, and represents each one in terms of its first
index and the increment between consecutive terms, obtaining $\mathcal{T}^C$
from $\mathcal{T}$, with the details given below. 

In order to compress $\mathcal{T}$ into $\mathcal{T}^C$, we partition $[1,x]$ into the $2mk+2$ disjoint intervals $H_j=\left[\frac{jx}{2(mk+1)}+1,\frac{(j+1)x}{2(mk+1)}\right)$, where $m=\log n$. 
The goal is, possibly through the addition of extra candidates, to represent the candidates in each interval as a single arithmetic series. 
This series will be represented by its first term, as well as the increment between its consecutive terms, $\pi_j$. 
As each new candidate arrives, we update $\pi_j$ (except for the first update,
$\pi_j$ never increases, and it may shrink by an integer factor).
Throughout the process, we maintain the invariant, by updating $\pi_j$, that the arithmetic sequence represented in $H_j$ contains all candidates in $H_j$ output by the $k$-Mismatch algorithm. 
Then it is clear that $\mathcal{T}^C$ and $\{\pi_j\}$ take sublinear space, satisfying \hyperref[prop:2]{Property 2}.

\subsection{Pass 1: \hyperref[prop:3]{Property 3}.} It remains to describe how to update $\pi_j$.
The first time we see two candidates in $H_j$, we set $\pi_j$ to be the increment between the candidates (before, it is set to -1). 
Each subsequent time we see a new candidate index in the interval $H_j$, we update $\pi_j$ to be the greatest common divisor of $\pi_j$ and the increment between the candidate and the smallest index in $\mathcal{T}\cap H_j$, which is kept explicitly. 
For instance, if our first candidate index is 10, and afterwards we receive 22, 26, 32 (assume the interval ends at 35), our $\pi_j$ values over time are -1, 12, 4, 2. Ultimately, the candidates that we will be checking in Pass 2 will be 10, 12, 14, 16, 18, \ldots, 34. 
For another example, see \figref{fig:tchj}.
\figtchj
We now need to show that the above invariant is maintained throughout the algorithm. 
To do this, we show that any $k$-period $p\in H_j$ is an increment of some multiple of $\pi_j$ away from the smallest index in $\mathcal{T}\cap H_j$. 
Then, if we insert implicitly into $\mathcal{T}$ {\it all indices} in $H_j$ whose distance from the smallest index in $\mathcal{T}\cap H_j$ is a multiple of $\pi_j$, we will guarantee that any $k$-period in $H_j$ will be included in $\mathcal{T}$. 

We now show that any $k$-period $p$ is implicitly represented in, and can be recovered from $\mathcal{T}^C$ and the values $\{\pi_j\}$ at the end of the first pass.

\begin{lemma}
\lemlab{lem:recover:indices}
If $p<\frac{n}{2}$ is a $k$-period and $p\in H_j$, then $p$ can be recovered from $\mathcal{T}^C$ and $\pi_j$.
\end{lemma}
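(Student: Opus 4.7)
The plan is to show that any $k$-period $p \in H_j$ must lie in $\mathcal{T} \cap H_j$, and then must lie in the arithmetic progression encoded by the stored pair $(q^*, \pi_j)$, where $q^* = \min(\mathcal{T} \cap H_j)$ is kept explicitly as part of $\mathcal{T}^C$. To see $p \in \mathcal{T}$, I would invoke \obsref{obs:num:mismatch} with $x = \frac{n}{2}$: since $p \le \frac{n}{2}$ is a $k$-period, $\HAM{S[1,\frac{n}{2}], S[p+1, p+\frac{n}{2}]} \le k$, so $p$ passes the membership criterion for $\mathcal{T}$, and combined with $p \in H_j$ this gives $p \in \mathcal{T} \cap H_j$.

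Next I would track the evolution of $\pi_j$ during the first pass. Enumerating the elements of $\mathcal{T} \cap H_j$ as $q_1 = q^* < q_2 < \cdots < q_r$ in the order they are discovered by the $k$-Mismatch algorithm, a straightforward induction on the number of candidates processed shows that after $q_i$ has been incorporated (for $i \ge 2$), the stored value is $\pi_j^{(i)} = \gcd{q_2 - q_1, \ldots, q_i - q_1}$. The base case $\pi_j^{(2)} = q_2 - q_1$ is exactly the initialization rule, and the inductive step uses the associativity identity $\gcd{\gcd{a_1, \ldots, a_{i-1}}, a_i} = \gcd{a_1, \ldots, a_i}$ together with the update rule defining $\pi_j^{(i)} = \gcd{\pi_j^{(i-1)}, q_i - q_1}$. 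At the end of the first pass, $\pi_j = \pi_j^{(r)}$ therefore divides $q_i - q_1$ for every $i$, and in particular divides $p - q^*$.

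Consequently $p = q^* + c \pi_j$ for some nonnegative integer $c$, and since $p \in H_j$ one has $c \pi_j < |H_j|$. The recovery procedure, given $\mathcal{T}^C$ and $\pi_j$, simply enumerates $q^*, q^* + \pi_j, q^* + 2\pi_j, \ldots$ until the next value leaves $H_j$; $p$ is guaranteed to appear on this list. The one boundary case is $|\mathcal{T} \cap H_j| = 1$, where $\pi_j$ is never updated past its sentinel value $-1$: in that case $p = q^*$ is itself the only candidate in $H_j$ and is stored directly in $\mathcal{T}^C$.

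I expect the principal obstacle is the clean formulation of the gcd invariant and verifying that the online update genuinely tracks $\gcd{q_2 - q_1, \ldots, q_r - q_1}$ regardless of the order in which the $q_i$ arrive; everything else is a short consequence of \obsref{obs:num:mismatch} and the fact that divisibility by the final $\pi_j$ forces $p$ onto the enumerated arithmetic progression.
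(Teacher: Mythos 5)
Your proposal is correct and follows essentially the same route as the paper's proof: show that $p$ passes the $k$-Mismatch criterion via \obsref{obs:num:mismatch} with $x=\frac{n}{2}$, handle the single-candidate case where $p$ is stored explicitly, and otherwise argue that the final $\pi_j$ divides $p-q^*$ because each gcd update only replaces $\pi_j$ by a divisor of itself. Your explicit induction establishing $\pi_j^{(i)}=\gcd{q_2-q_1,\ldots,q_i-q_1}$ is just a more careful rendering of the paper's one-line observation that later gcd operations preserve the divisibility of $p-q$.
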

\begin{proof}
Since $p\in H_j$ is a $k$-period, then it satisfies $\HAM{S[1,n-p],S[p+1,n]}\le k$. 
More specifically, $i=p$ satisfies 
$$\HAM{S\left[1,\frac{n}{2}\right],S\left[i+1,\frac{n}{2}+i\right]}\le k$$
and will be reported by the $k$-Mismatch Algorithm. 
If there is no other index in $\mathcal{T}^C\cap H_j$, then $p$ will be inserted into $\mathcal{T}^C$ in the first pass, so $p$ can clearly be recovered from $\mathcal{T}^C$.

On the other hand, if there is another index $q$ in $\mathcal{T}^C\cap H_j$, then $\pi_j$ will be updated to be a divisor of the pairwise distances. 
Hence, the increment $p-q$ is a multiple of $\pi_j$. Any change that might later 
happen to $\pi_j$ will be due to a gcd operation, and thus, will reduce it by a factor by at least $2$. Thus, $p-q$ will remain a multiple of the final value of $\pi_j$, and $p$ will be recovered
at the end of the first pass as a member of $\mathcal{T}$.
\end{proof}
Thus \hyperref[prop:3]{Property 3} is satisfied.
The first pass algorithm in full appears below.
\begin{mdframed}
\underline{(To determine any $k$-period $p$ with $p\leq\frac{n}{2}$):}
\vskip 0.05in\noindent
First pass: 
\begin{enumerate}
\item
Initialize $\pi_j=-1$ for each $0\le j< 2k\log n+2$. 
\item
Initialize $\mathcal{T}^C=\emptyset$.
\item
For each index $i$ such that (using the $k$-Mismatch algorithm)
$$\HAM{S\left[1,\frac{n}{2}\right],S\left[i+1,\frac{n}{2}+i\right]}\le k$$
\begin{itemize}[]
\item
For the integer $j$ for which $i$ is in the interval $H_j=\left[\frac{jn}{4(k\log n+1)}+1,\frac{(j+1)n}{4(k\log n+1)}\right)$:
\begin{enumerate}
\item
If there exists no candidate $t\in\mathcal{T}^C$ in the interval $H_j$, then add $i$ to $\mathcal{T}^C$.
\item
Otherwise, let $t$ be the smallest candidate in $\mathcal{T}^C$ and either $\pi_j=-1$ or $\pi_j>0$. 
If $\pi_j=-1$, then set $\pi_j=i-t$. 
Otherwise, set $\pi_j=\gcd{\pi_j,i-t}$.  
\end{enumerate}
\end{itemize}
\end{enumerate}
\end{mdframed}

\subsection{Pass 2: \hyperref[prop:4]{Property 4}.}
Our task in the second pass is to verify whether each candidate recovered from $\mathcal{T}^C$ and $\{\pi_j\}$ is actually a $k$-period or not. 
Thus, we must simultaneously check whether $\HAM{S[1,n-p],S[p+1,n]}\le k$ for each candidate $p$, without using linear space.
Fortunately, \thmref{thm:final} states that at most $32k^2\log n+1$ unique fingerprints for substrings of length $\pi_j$ are sufficient to recover the fingerprints of both $S[1,n-p]$ and $S[p+1,n]$ for any $p\in H_j$.

Before detailing, we first state a structural property, whose proof we defer to \secref{sec:structural}. 
This property states that the greatest common divisor of the pairwise difference of any candidate $k$-periods within $H_j$ must be a $(32k^2\log n+1)$-period. 
\newcommand{\thmfinal}{
For some $0\le j<2mk+2$, let 
$$\mathcal{I}_j=\left\{i\in H_j\,\middle|\,\HAM{S[1,x],S[i+1,i+x]}\le k\right\}.$$
For any $p_1<\ldots<p_m\in\mathcal{I}$, the greatest common divisor $d$ of $p_2-p_1,p_3-p_1\ldots,p_m-p_1$ satisfies 
$$\HAM{S[1,x],S[d+1,d+x]}\le 32mk^2+1.$$
}
\begin{theorem}\thmlab{thm:final}
\thmfinal
\end{theorem}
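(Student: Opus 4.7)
My plan is to reduce the $m$-period claim to a two-period structural lemma proved via a 2D lattice / discrete isoperimetric argument (as hinted at by \figref{fig:lattice}), and then to aggregate across the $m-1$ pairs $(p_1,p_\ell)$ in a way that avoids exponential blow-up.

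\emph{Step 1 (reduction to differences).} I would first use the triangle inequality for Hamming distance. Since every $p_\ell$ satisfies $\HAM{S[1,x], S[p_\ell+1, p_\ell+x]} \le k$, it follows that for each $\ell \ge 2$ the shift $q_\ell := p_\ell - p_1$ satisfies
$$\HAM{S[p_1+1, p_1+x],\; S[p_\ell+1, p_\ell+x]} \le 2k.$$
Thus we obtain $m-1$ shifts $q_2, \ldots, q_m$, each of which is a $2k$-period of a common length-$x$ window, with $\gcd(q_2, \ldots, q_m) = d$. Moreover, each $q_\ell \le |H_j| \le x/(2mk+2)$, so all shifts are small compared to the window.

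\emph{Step 2 (two-period structural lemma).} Next I would prove the following pairwise lemma: if two shifts $p,q$ are $k_p$- and $k_q$-periods of a common length-$x$ window, then $\gcd(p,q)$ is an $O((k_p+k_q)^2)$-period of that window (apart from a negligible boundary contribution). The proof uses the 2D lattice construction of \figref{fig:lattice}: arrange positions in a grid where horizontal edges correspond to shifts of length $p$ and vertical edges to shifts of length $q$, and mark an edge \emph{bad} if the characters at its endpoints differ. There are at most $k_p$ bad horizontal and $k_q$ bad vertical edges. By the discrete isoperimetric inequality, any region enclosed by bad edges has area bounded by the square of its half-perimeter, so the total enclosed area is $O((k_p+k_q)^2)$. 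By Bezout, $\gcd(p,q)$ is an integer combination of $p$ and $q$, hence corresponds to an integer walk in the lattice; any position outside the enclosed regions admits a walk along only good edges to its $\gcd(p,q)$-shifted counterpart, so $S[t] = S[t+\gcd(p,q)]$ there. The mismatch count is therefore at most the total enclosed area.

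\emph{Step 3 (aggregation over $m$ periods).} To combine the $m-1$ difference shifts without incurring the doubly-exponential blow-up of naive pairwise iteration, I would apply the two-period argument in parallel rather than in sequence. For each $\ell \in \{2,\ldots,m\}$, the $2k$-period $q_\ell$ together with one fixed reference (either $p_1$ or the current running gcd, tracked carefully) contributes an enclosed-region set of size $O(k^2)$ via Step~2. Since $d$ divides each $q_\ell$, any position $t$ for which $S[t] \ne S[t+d]$ must lie in one of these enclosed regions, so the union bound gives $O(m k^2)$ mismatches for shift $d$, matching the target $32mk^2+1$ (the $+1$ absorbs the boundary effect from the slight window shift introduced in Step~1).

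\emph{Main obstacle.} The principal technical difficulty is Step~3. A naive Euclidean-style iteration $d_{\ell+1} = \gcd(d_\ell, q_{\ell+1})$ followed by applications of Step~2 would compound the $O(k^2)$ bound quadratically at each round, exploding to something like $k^{2^{m-1}}$ rather than the desired $O(mk^2)$. The correct argument must share the bad-edge budget across the $m-1$ pairs in an amortized fashion (or embed all periods into a single higher-dimensional lattice whose isoperimetric bound scales linearly in $m$), so that each of the $m-1$ lattice analyses uses its own fresh $O(k)$ budget of bad edges and contributes its own $O(k^2)$ enclosed area independently. Making this bookkeeping precise enough to yield the explicit constant $32$ is where the bulk of the work lies.
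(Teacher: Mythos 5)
Your Steps 1 and 2 track the paper's argument closely: the reduction to the differences $q_\ell=p_\ell-p_1$, which become $2k$-periods by the triangle inequality, is exactly how the paper derives \thmref{thm:final} from \thmref{thm:general:gcd}, and your 2D lattice/isoperimetry lemma is the paper's \lemref{lem:mainBig}. One omission in Step 2: the lattice argument requires $q/\gcd{p,q}\ge 2k+1$ so that distinct grid points correspond to distinct indices of $S$ (\claimref{claim:unique}); when this ratio is small, grid points collide, the bad-edge count is no longer bounded by the mismatch count, and the paper has to fall back on a separate congruence-class argument (\lemref{lem:mainSmall}, via the hopping \lemref{lem:hop}). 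Your sketch does not address that case.

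The genuine gap is Step 3, which you yourself flag as the main obstacle but do not resolve. Your primary mechanism---run $m-1$ pairwise 2D analyses, each pairing $q_\ell$ with a fixed reference, and union-bound the enclosed regions---does not close the argument. If a position $t$ avoids all $m-1$ pairwise enclosed regions, you only learn $S[t]=S[t+g_\ell]$ for each pairwise gcd $g_\ell$ separately; since $d=\gcd{q_2,\ldots,q_m}$ is in general an integer combination of \emph{all} the $q_\ell$ simultaneously, a good-edge walk from $t$ to $t+d$ must use all $m-1$ shift lengths at once, and no single pairwise lattice supplies such a walk. (Iterating Euclid-style instead compounds the mismatch bound multiplicatively at each round, as you note.) The parenthetical alternative you mention in passing---embedding all shifts into one $m$-dimensional hypergrid---is the route the paper actually takes in \thmref{thm:general:gcd}: the total bad-edge budget across all $m$ shifts is $O(mk)$, the hypergrid boundary at most doubles the effective surface, and a hyperrectangular enclosed region with $2m\ell^{m-1}\le 2mk$ boundary points has volume $\ell^m\le k^{m/(m-1)}\le k^2$. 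The linear-in-$m$ scaling thus comes from the surface-to-volume ratio in $m$ dimensions, not from amortizing independent 2D analyses; without that simultaneous embedding (or an equivalent argument), your Step 3 is a restatement of the difficulty rather than a proof.
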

Observe that $\pi_j$ is exactly $d$. 
Moreover, each time the value of $\pi_j$ changes, it gets divided by an integer factor at least equal to 2, ending up finally as a positive integer.
Since $\pi_j\le n$, this change can occur at most $\log n$ times, and so $m\le\log n$. 
We now show that we can verify all candidates in sublinear space.
\begin{lemma}
\lemlab{lem:recover:prints}
Let $p_i$ be a candidate $k$-period for a string $S$, with $p_1<p_2<\ldots<p_m$ all contained within $H_j$. 
Given the fingerprints of $S[1,n-p_1]$ and $S[p_1+1,n]$, we can determine whether or not $S$ has $k$-period $p_i$ for any $1\le i\le m$ by storing at most $32k^2\log n+1$ additional fingerprints.
\end{lemma}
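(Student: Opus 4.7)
The plan is to write each candidate as $p_i = p_1 + c_i\pi_j$ for some non-negative integer $c_i$ (which is guaranteed because $\pi_j$ is the gcd of the pairwise differences $p_t-p_1$) and then to recover $\mathsf{fp}(S[1,n-p_i])$ and $\mathsf{fp}(S[p_i+1,n])$ from the given fingerprints of $S[1,n-p_1]$ and $S[p_1+1,n]$ by Karp-Rabin-style composition. Once these fingerprints are in hand, \thmref{thm:cfp:datastructure} decides whether $\HAM{S[1,n-p_i],S[p_i+1,n]}\le k$, which is exactly the $k$-periodicity test.

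Concretely, $S[p_i+1,n]$ is $S[p_1+1,n]$ with the first $c_i\pi_j$ characters deleted, and $S[1,n-p_i]$ is $S[1,n-p_1]$ with the last $c_i\pi_j$ characters deleted. Karp-Rabin arithmetic therefore reduces the problem to maintaining fingerprints of the two trimmed pieces $S[p_1+1,p_1+c_i\pi_j]$ and $S[n-p_i+1,n-p_1]$ for every $i$. Each of these pieces has length $c_i\pi_j$ and splits naturally into $c_i$ consecutive blocks of length $\pi_j$ aligned to a common residue class modulo $\pi_j$.

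The crux is that these aligned blocks are highly repetitive. By \thmref{thm:final}, $\pi_j$ is a $k'$-near-period with $k'\le 32k^2\log n + 1$, so the total number of positions at which $S$ disagrees with its $\pi_j$-shift is at most $k'$. Any two consecutive aligned blocks $B_c$ and $B_{c+1}$ can differ at internal position $i$ only at such a $\pi_j$-shift mismatch, so summing over all $(c,i)$ bounds the number of block-change events by $k'$. Hence the aligned blocks partition into at most $k'+1$ maximal runs of identical content, and storing one fingerprint per distinct run yields enough information to assemble every trimmed fingerprint, and hence every $\mathsf{fp}(S[1,n-p_i])$ and $\mathsf{fp}(S[p_i+1,n])$, via repeated Karp-Rabin concatenation. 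This gives a budget of $k'+1\le 32k^2\log n+1$ additional fingerprints.

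The main obstacle will be rigorously justifying the composition step, since the objects we call ``fingerprints'' are the richer $k$-mismatch data structures of \thmref{thm:cfp:datastructure} rather than plain hashes; one must check that the construction of \cite{CliffordFPSS16} supports both concatenation and prefix/suffix stripping in a way compatible with the block-run decomposition, which is possible because that construction is itself built from Karp-Rabin subpattern fingerprints. A secondary technical wrinkle is that left-trimming and right-trimming produce blocks at two different alignments modulo $\pi_j$; this is handled either by fixing one canonical alignment and absorbing the $O(1)$ offset into the already-given fingerprints of $S[1,n-p_1]$ and $S[p_1+1,n]$, or by observing that the $k'+1$ run-count bound holds uniformly over alignments so both trim directions still fit within the stated budget.
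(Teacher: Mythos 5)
Your proposal is correct and follows essentially the same route as the paper's proof: both invoke \thmref{thm:final} to conclude that $\pi_j$ is a $(32k^2\log n+1)$-period, decompose the relevant portion of $S$ into length-$\pi_j$ blocks, use \obsref{obs:num:words} to bound the number of block changes (hence distinct runs/fingerprints to store), and then recover each $\mathsf{fp}(S[1,n-p_i])$ and $\mathsf{fp}(S[p_i+1,n])$ by Karp--Rabin composition before testing $\HAM{S[1,n-p_i],S[p_i+1,n]}\le k$ via \thmref{thm:cfp:datastructure}. Your extra care about the two trim alignments and about whether the sketches of \cite{CliffordFPSS16} compose is a refinement of details the paper leaves implicit, not a different argument.
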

\begin{proof}
Consider a decomposition of $S$ into substrings $w_i$ of length $p_i$, so that $S=w_1\circ w_2\circ w_3\circ\ldots$. 
Note that each index $i$ for which $w_i\neq w_{i+1}$ corresponds with at least one mismatch. 
It follows from \obsref{obs:num:words} that there exist at most $k$ indices $i$ for which $w_i\neq w_{i+1}$. 
Thus, recording the fingerprints and locations of these indices $i$ suffice to determine whether or not there are $k$ mismatches for candidate period $p_i$.

By \thmref{thm:final}, the greatest common divisor of the difference between each term in $\mathcal{I}$ is a $(32k^2\log n+1)$-period $\pi_j$. 
Thus, $S$ can be decomposed $S=v\circ v_1\circ v_2\circ v_3\circ\ldots$ so that $v$ has length $p_1$, and each substring $v_i$ has length $\pi_j$. 
It follows from \obsref{obs:num:words} that there exist at most $32k^2\log n+1$ indices $i$ for which $v_i\neq v_{i+1}$.  
Therefore, recording the fingerprints and locations of these indices $i$ allow us to recover the fingerprint of $S[1,n-p_i]$ from the fingerprint of $S[1,n-p_{i-1}]$, since $p_i-p_{i-1}$ is a multiple of $\pi_j$.  
Similarly, we can recover the fingerprint of $S[p_i+1,n]$ from the fingerprint of $S[p_{i-1}+1,n]$. 
Hence, we can confirm whether or not $p_i$ is a $k$-period.
\end{proof}
The second pass algorithm in full follows.
\begin{mdframed}
\underline{(To determine all the $k$-periods $p$ with $p\leq\frac{n}{2}$):}
\vskip 0.05in\noindent
Second pass: 
\begin{enumerate}
\item
For each $t$ such that $t\in\mathcal{T}^C$:
\begin{enumerate}
\item
Let $j$ be the integer for which $t$ is in the interval $H_j=\left[\frac{jn}{4(k\log n+1)}+1,\frac{(j+1)n}{4(k\log n+1)}\right)$
\item
If $\pi_j>0$, then record up to $32k^2\log n+1$ unique fingerprints of length $\pi_j$ and of length $t$, starting from $t$.
\item
Otherwise, record up to $32k^2\log n+1$ unique fingerprints of length $t$, starting from $t$.
\item
Check if $\HAM{S[1,n-t],S[t+1,n]}\le k$ and return $t$ if this is true.
\end{enumerate}
\item
For each $t$ which is in interval $H_j=\left[\frac{jn}{4(k\log n+1)}+1,\frac{(j+1)n}{4(k\log n+1)}\right)$ for some integer $j$:
\begin{itemize}[]
\item
If there exists an index in $\mathcal{T}^C\cap H_j$ whose distance from $t$ is a multiple of $\pi_j$, then check if $\HAM{S[1,n-t],S[t+1,n]}\le k$ and return $t$ if this is true.
\end{itemize}
\end{enumerate}
\end{mdframed}
This proves \hyperref[prop:4]{Property 4}.
Next, we show the correctness of the algorithm for small $k$-periods.
\begin{lemma}
For any $k$-period $p\le\frac{n}{2}$, the algorithm outputs $p$.
\end{lemma}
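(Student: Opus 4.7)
The plan is to follow $p$ through both passes and show it ultimately appears in the output. First I would apply \obsref{obs:num:mismatch} with $x=\frac{n}{2}$, which immediately yields $\HAM{S[1,n/2],S[p+1,n/2+p]}\le k$ from the assumption that $p$ is a $k$-period. This in turn guarantees that the $k$-Mismatch Algorithm invoked in Pass~1 (in the form of \corref{cor:cfp:algorithm}) reports the index $p$ with probability at least $1-1/n^2$.

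Next I would locate the interval $H_j$ containing $p$ and split into two cases. If $p$ is the first reported index falling inside $H_j$, then it is added directly to $\mathcal{T}^C$ and the first loop of Pass~2 visits it. Otherwise, at the moment $p$ is processed in Pass~1 the variable $\pi_j$ is updated via $\pi_j\gets\gcd{\pi_j,p-t}$, where $t$ is the smallest element of $\mathcal{T}^C\cap H_j$; since every subsequent gcd update only divides $\pi_j$ by an integer factor, the difference $p-t$ remains a multiple of the final value of $\pi_j$ at the end of Pass~1. This is precisely the content of \lemref{lem:recover:indices}, which I would invoke to conclude that Pass~2 also visits $p$ via its second loop, which enumerates every index in $H_j$ whose offset from some element of $\mathcal{T}^C\cap H_j$ is a multiple of $\pi_j$.

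Finally I would invoke \lemref{lem:recover:prints} to argue that the Pass~2 check $\HAM{S[1,n-p],S[p+1,n]}\le k$ is carried out correctly using only the recorded data: at most $32k^2\log n+1$ fingerprints of length $\pi_j$ suffice to reconstruct the fingerprints of $S[1,n-p]$ and $S[p+1,n]$ from the anchor fingerprints stored at $t$, thanks to the $(32k^2\log n+1)$-periodicity guaranteed by \thmref{thm:final}. Since $p$ is a genuine $k$-period, the Hamming-distance test succeeds and $p$ is emitted.

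The main subtlety I expect concerns the gcd bookkeeping inside $H_j$: I must verify that although $\pi_j$ may keep shrinking after $p$ is first encountered, the invariant ``$p-t$ is a multiple of the current $\pi_j$'' survives every later gcd operation. This is exactly what ensures the second loop of Pass~2 reaches $p$, and it is the one place where a careless argument could fail; once this invariant is established, the remaining steps reduce to direct citations of the earlier lemmas.
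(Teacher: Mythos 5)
Your proof is correct and follows essentially the same route as the paper: both reduce the claim to \lemref{lem:recover:indices} (to show $p$ is recoverable from $\mathcal{T}^C$ and $\pi_j$ after Pass~1) and \lemref{lem:recover:prints} (to show the Pass~2 verification succeeds). The extra detail you give on the gcd invariant is already the content of \lemref{lem:recover:indices}, so no new argument is needed.
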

\begin{proof}
Since the intervals $\{H_j\}$ cover $\left[1,\frac{n}{2}\right]$, then $p\in H_j$ for some $j$. 
It follows from \lemref{lem:recover:indices} that after the first pass, $p$ can be recovered from $\mathcal{T}$ and $\pi_j$. 
Thus, the second pass tests whether or not $p$ is a $k$-period. 
By \lemref{lem:recover:prints}, the algorithm outputs $p$, as desired.
\end{proof}
\subsection{Finding large $k$-periods}
As in the previous discussion, we would like to pick candidate periods during our first pass. 
However, if a $k$-period $p$ satisfies $p>\frac{n}{2}$, then clearly it will no longer satisfy 
$$\HAM{S\left[1,\frac{n}{2}\right],S\left[p+1,p+\frac{n}{2}\right]}\le k,$$
as $p+\frac{n}{2}>n$, and $S\left[p+\frac{n}{2}\right]$ is undefined. 
Instead, recall that $\HAM{S[1,x]=S[p+1,p+x]}\le k$ for all $x\le n-p$. 
Ideally, when choosing candidate periods $p$ based on their
satisfying this formula, 
we would like to use as large
an $x$ as possible without exceeding $n-p$, 
but we cannot do this without knowing the value of $p$. 
Instead, \cite{ErgunJS10} observes we can try exponentially decreasing values of $x$: 
we run $\log n$ instances of the algorithm sequentially, with $x=\frac{n}{2},\frac{n}{4},\ldots$, since one of these values of $x$ must be the largest one that does not lead to an
illegal index of $S$. 
Therefore, the desired instance produces $p$, while all other instances do not. 
\begin{mdframed}
\underline{(To determine a $k$-period $p$ if $p>\frac{n}{2}$):}
\vskip 0.05in\noindent
First pass:
\begin{enumerate}
\item
Initialize $\pi^{(m)}_j=-1$ for each $0\le j<2k\log n+2$ and $0\le m\le\log n$. 
\item
Initialize $\mathcal{T}_m^C=\emptyset$.
\item
For each index $i$, let $r$ be the largest $m$ such that $\frac{n}{2}+\frac{n}{4}+\ldots+\frac{n}{2^r}\le i$. 
Using the $k$-Mismatch algorithm, check whether 
$$\HAM{S\left[1,\frac{n}{2^r}\right],S\left[i+1,i+\frac{n}{2^r}\right]}\le k.$$
If so, let $R=\frac{n}{2}+\frac{n}{4}+\ldots+\frac{n}{2^r-1}$ and $j$ be the integer for which $i$ is in the interval 
\[H^{(r)}_j=\left[R+\frac{nj}{2^{r+1}(k\log n+1)}+1,R+\frac{n(j+1)}{2^{r+1}(k\log n+1)}\right)\]
\begin{enumerate}
\item
If there exists no candidate $t\in\mathcal{T}_r^C$ in the interval $H^{(r)}_j$, then add $i$ to $\mathcal{T}_{r}^C$.
\item
Otherwise, let $t$ be the smallest candidate in $\mathcal{T}_r^C$ and either $\pi^{(r)}_j=-1$ or $\pi^{(r)}_j>0$. 
If $\pi^{(r)}_j=-1$, then set $\pi^{(r)}_j=i-t$. 
Otherwise, set $\pi^{(r)}_j=\gcd{\pi^{(r)}_j,i-t}$.  
\end{enumerate}
\end{enumerate}
\end{mdframed}
This partition of $[1,n]$ into the disjoint intervals $\left[1,\frac{n}{2}\right]$, $\left[\frac{n}{2}+1,\frac{n}{2}+\frac{n}{4}\right]$, $\ldots$ guarantees that any $k$-period $p$ is contained in one of these intervals. 
Moreover, the intervals $\{H^{(r)}_j\}$ partition 
\[\left[\frac{n}{2}+\frac{n}{4}+\ldots+\frac{n}{2^{r-1}},\frac{n}{2}+\ldots+\frac{n}{2^r}\right],\]
and so $p$ can be recovered from $\mathcal{T}_r^C$ and $\{\pi^{(r)}_j\}$.
We now present the algorithm for the second-pass to find all $k$-periods $p$ for which $p>\frac{n}{2}$. 
\begin{mdframed}
Second pass: 
\begin{enumerate}
\item
For each $t$ and any $r$ such that $t\in\mathcal{T}_r^C$:
\begin{enumerate}
\item
Let $R=\frac{n}{2}+\frac{n}{4}+\ldots+\frac{n}{2^{r-1}}$ and $j$ be the integer for which $t$ is in the interval 
\[H^{(r)}_j=\left[R+\frac{nj}{2^{r+1}(k\log n+1)}+1,R+\frac{n(j+1)}{2^{r+1}(k\log n+1)}\right)\]
\item
If $\pi^{(r)}_j>0$, then record up to $32k^2\log n+1$ unique fingerprints of length $\pi^{(r)}_j$ and of length $t$, starting from $t$.
\item
Otherwise, record up to $32k^2\log n+1$ unique fingerprints of length $t$, starting from $t$.
\item
Check if $\HAM{S[1,n-t],S[t+1,n]}\le k$ and return $t$ if this is true.
\end{enumerate}
\item
For each $t$ which is in interval $H^{(r)}_j=\left[R+\frac{nj}{2^{r+1}(k\log n+1)}+1,R+\frac{n(j+1)}{2^{r+1}(k\log n+1)}\right)$ for some integer $j$:
\begin{enumerate}
\item
If there exists an index in $\mathcal{T}_r^C\cap H^{(r)}_j$ whose distance from $t$ is a multiple of $\pi^{(r)}_j$, then check if $\HAM{S[1,n-t],S[t+1,n]}\le k$ and return $t$ if this is true.
\end{enumerate}
\end{enumerate}
\end{mdframed}
Since correctness follows from the same arguments as the case where $p\le\frac{n}{2}$, it remains to analyze the space complexity of our algorithm.
\begin{theorem}
There exists a two-pass algorithm that outputs all the $k$-periods of a given string using $\O{k^4\log^9 n}$ space.
\end{theorem}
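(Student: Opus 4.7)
The proof is essentially an exercise in space accounting, stitching together the subroutine bounds from \thmref{thm:cfp:datastructure} and \thmref{thm:cfp:algorithm} with the combinatorial bounds on $|\mathcal{T}^C|$ established via the partition into the intervals $H_j$ (and $H_j^{(r)}$). My plan is to account separately for (a) the running cost of the $k$-Mismatch algorithm instances during the first pass, (b) the compressed candidate set and its auxiliary gcd values carried between passes, and (c) the fingerprints allocated during the second pass to enable the verification step. I would then sum these contributions and observe that the last one dominates.

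For the first pass, the small-period branch runs one copy of the $k$-Mismatch algorithm from \corref{cor:cfp:algorithm} at cost $\O{k^2\log^8 n}$. The large-period branch runs $\log n$ copies in parallel, one for each scale $x=n/2^r$, giving $\O{k^2\log^9 n}$. The compressed set $\mathcal{T}^C$ stores at most one index per interval $H_j$, and there are $\O{k\log n}$ intervals per scale and $\O{\log n}$ scales, so $|\mathcal{T}^C|+\sum_r|\mathcal{T}_r^C|=\O{k\log^2 n}$, which is negligible compared to the fingerprint storage.

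For the second pass, \lemref{lem:recover:prints} tells us that per candidate $t$ we need at most $32k^2\log n+1$ fingerprints of length $\pi_j$ (or $\pi_j^{(r)}$) together with the reference fingerprints of $S[1,n-t]$ and $S[t+1,n]$. By \thmref{thm:cfp:datastructure}, each fingerprint costs $\O{k\log^6 n}$ bits. Multiplying $|\mathcal{T}^C|\cdot (32k^2\log n+1) \cdot \O{k\log^6 n}$ and taking the maximum over the two branches (the large-period branch dominates because of the extra $\log n$ factor from running all scales in parallel) yields $\O{k\log^2 n}\cdot \O{k^2\log n}\cdot \O{k\log^6 n} = \O{k^4\log^9 n}$.

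The only step that requires any subtlety, rather than pure bookkeeping, is justifying that at most one entry per interval $H_j$ is recorded in $\mathcal{T}^C$ (so that $|\mathcal{T}^C|$ scales like the number of intervals rather than like $|\mathcal{T}|$), and that the recovery of the full $\mathcal{T}$ in pass two does not require explicitly materializing all of $\mathcal{T}$ at once. Both follow from the structure of the first-pass algorithm: the $\pi_j$ values encode an arithmetic-progression cover of $\mathcal{T}\cap H_j$, and the second-pass verification can iterate through these implicit indices one at a time, updating fingerprints incrementally as justified by \lemref{lem:recover:prints}. Once this is in place, combining the contributions gives the claimed $\O{k^4\log^9 n}$ bound.
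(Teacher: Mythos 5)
Your proposal is correct and follows essentially the same route as the paper: $\log n$ parallel instances of the $k$-Mismatch algorithm at $\O{k^2\log^8 n}$ each in the first pass, and in the second pass $\O{k\log n}$ explicit candidates per scale, each requiring $\O{k^2\log n}$ fingerprints of $\O{k\log^6 n}$ bits, multiplied over $\log n$ scales to give $\O{k^4\log^9 n}$. Your added remarks on why $|\mathcal{T}^C|$ stores one index per interval and why $\mathcal{T}$ need not be materialized are points the paper defers to \lemref{lem:recover:indices} and \lemref{lem:recover:prints}, but the accounting is the same.
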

\begin{proof}
In the first pass, for each $\mathcal{T}_m$, we maintain a $k$-Mismatch algorithm which requires $\O{k^2\log^8 n}$ bits of space, as in \corref{cor:cfp:algorithm}.
Since $1\le m\le\log n$, we require $\O{k^2\log^9 n}$ bits of space in total.
In the second pass, we keep up to $\O{k^2\log n}$ fingerprints for any set of indices in $\mathcal{T}_m$.
Each fingerprint requires space $\O{k\log^6 n}$ and there may be $\O{k\log n}$ indices in $\mathcal{T}_m$ for each $1\le m\le\log n$, for a total of $\O{k^4\log^7 n}$ bits of space. 
Thus, $\O{k^4\log^9 n}$ bits of space suffice for both passes.
\end{proof}
\section{One-Pass Algorithm to Compute $k$-Periods}\seclab{sec:onepass}
We now give a one-pass algorithm that outputs all the $k$-periods smaller than $\frac{n}{2}$. 
Similar to two-pass algorithm, we have two processes running in parallel. 
The first process handles all the $k$-periods $p$ with $p\le\frac{n}{4}$, while the second process handles the $k$-periods $p$ with $p>\frac{n}{4}$. 
Both processes are designed again based on the crucial observation that all the $k$-periods $p$ must satisfy $\HAM{S[1,x],S[p+1,p+x]}\le k$ for all $x\le n-p$.
In the first process, we set $x=\frac{n}{2}$ and find all indices $i$ such that $S\left[i+1,i+\frac{n}{2}\right]$ has at most $k$ mismatches from $S\left[1,\frac{n}{2}\right]$. 

The second process cannot use the same approach, because the $k$-Mismatch Algorithm reports that index $i$ is a candidate after reading position $\frac{n}{2}+i$, at which point we have already passed $n-i$. This means that the fingerprint of $S[1,n-i]$ cannot be built. 
For example, see \figref{fig:comparison}.
\figcomparison

Thus, for a fixed $p$ in the second process, if we set $x$ to be the largest power of two which does not exceed $n-2p$, the $k$-mismatch algorithm could report $p$. However, we cannot do this without knowing the value of $p$. 

Building off the ideas in \cite{ErgunJS10}, we run $\log n$ instances of the algorithm in parallel, with $x=1,2,4,\ldots$, then one of these values of $x$ must correspond to the instance of $k$-mismatch algorithm that recognizes $p$ and reports it for later verification.

\subsection{Finding small $k$-periods}
We consider all the $k$-periods $p$ with $p\le\frac{n}{4}$ for this subsection. 
Run the $k$-Mismatch algorithm to find
$$\mathcal{T}=\left\{i\,\middle|i\le\frac{n}{4},\HAM{S\left[1,\frac{n}{2}\right],S\left[i+1,i+\frac{n}{2}\right]}\le k\right\}.$$
Upon finding an index $i\in\mathcal{T}$, the algorithm uses the fingerprint for $S\left[i+1,i+\frac{n}{2}\right]$ to continue building $S[i+1,n]$. 
Simultaneously, it builds $S[1,n-i]$, and checks whether $\HAM{S[1,n-i],S[i+1,n]}\le k$. 
The algorithm identifies that $i\in\mathcal{T}$ upon reading character $i+\frac{n}{2}-1$. 
Since $i\le\frac{n}{4}$, then $i+\frac{n}{2}-1<\frac{3n}{4}\le n-i$. 
Thus, the algorithm can identify $i$ in time to build $S[1,n-i]$.
By \thmref{thm:final}, these entries can be computed from a sequence of compressed fingerprints. 

\subsection{Finding large $k$-periods}
Now, consider all the $k$-periods $p$ with $\frac{n}{4}<p\le\frac{n}{2}$. 
Let $I_m=\left[\frac{n}{2}-2^m+1, \frac{n}{2}-2^{m-1}\right]$ and for $1\le m\le\log n-1$, define
$$\mathcal{T}_m=\left\{i\,\middle|i\in I_m,\HAM{S[1,2^m],S[i+1,i+2^m]}\le k\right\}.$$
Let $\pi_m$ be a $k$-period of $S[1,2^m]$.
We first consider the case where $\pi_m\ge\frac{2^m}{4}$ and then the case where $\pi_m<\frac{2^m}{4}$. 

\begin{observation}
\cite{CliffordFPSS16}
\obslab{obs:apart}
If $p$ is a $k$-period for $S[1,n/2]$, then each $i$ such that 
$$\HAM{S\left[1,\frac{n}{2}\right],S\left[i+1,i+\frac{n}{2}\right]}\le\frac{k}{2}$$
must be at least $p$ symbols apart.
\end{observation}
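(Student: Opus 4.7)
\bigskip

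\noindent\textbf{Proof plan.} The plan is to argue by contradiction and show that if two candidate indices were closer together than $p$, then their common offset would itself be a $k$-period of $S[1,n/2]$, contradicting the minimality of $p$ (the observation should be read with $p$ denoting the smallest $k$-period, as in \cite{CliffordFPSS16}). So suppose $i_1 < i_2$ both satisfy $\HAM{S[1,n/2],S[i_j+1,i_j+n/2]}\le k/2$, and set $q = i_2 - i_1$. The goal is to show $q \ge p$.

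The main step is a pointwise alignment argument. Let $M_1 \subseteq [1,n/2]$ be the set of indices where $S[\cdot]$ and $S[i_1+\cdot]$ disagree, and similarly $M_2$ for $i_2$; by hypothesis $|M_1|,|M_2|\le k/2$. For any index $j\in[1,n/2-q]$, I would use the trivial identity $S[i_1+(j+q)] = S[i_2+j]$ together with the two alignment conditions to chain equalities: $S[j+q] = S[i_1+(j+q)]$ unless $j+q\in M_1$, and $S[j] = S[i_2+j]$ unless $j\in M_2$. Consequently $S[j]\ne S[j+q]$ only if $j\in M_2$ or $j+q\in M_1$, so the total number of such $j$ is at most $|M_1|+|M_2|\le k$. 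This is precisely the statement $\HAM{S[1,n/2-q],S[q+1,n/2]}\le k$, i.e., $q$ is a $k$-period of $S[1,n/2]$.

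The final step is to conclude. Since $p$ is a $k$-period of $S[1,n/2]$ and we have just produced another $k$-period $q = i_2 - i_1$, we must have $q \ge p$ (interpreting $p$ as the smallest such period), giving the desired bound on the spacing of candidate indices. The only place where any care is needed is the indexing in the chain argument; everything else is a direct application of the triangle inequality for Hamming distance and the definition of $k$-period. I do not expect any real obstacle here, as the argument is essentially a two-step triangle inequality combined with the fact that shifting by $q = i_2 - i_1$ lines up the two candidate windows.
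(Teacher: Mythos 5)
Your proof is correct and is the standard argument for this fact (the paper itself states the observation with only a citation to \cite{CliffordFPSS16}, which uses essentially this chaining/triangle-inequality argument): two $\tfrac{k}{2}$-mismatch alignments at offsets $i_1<i_2$ compose to show that $q=i_2-i_1$ is a $k$-period, so $q\ge p$ when $p$ is the smallest $k$-period. Your reading of $p$ as the \emph{smallest} $k$-period is also the right one --- the statement is false for an arbitrary $k$-period, and this interpretation is what the paper's subsequent use (bounding $|\mathcal{T}_m|\le 4$ when $\pi_m\ge 2^m/4$) requires.
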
 

By \obsref{obs:apart}, if $\pi_m\ge\frac{2^m}{4}$, then $|\mathcal{T}_m|\le 4$. 
Moreover, we can detect whether $i\in\mathcal{T}_m$ by index $\frac{n}{2}-2^{m-1}+2^m$. 
On the other hand, $n-i\ge\frac{n}{2}+2^m+1$, and so we can properly build $S[1,n-i]$.

Now, suppose $\pi_m<\frac{2^m}{4}$. 
Since $\mathcal{T}_m$ may be linear in size, we use the same trick to obtain a succinct representation, whose properties satisfy those in \secref{sec:twopass}, while including a few additional indices. 
Let $S[2^m+1,2^{m+1}]=w_1w_2\ldots w_tw'$, where each $w_i$ has length $\pi_m$ and for $0\le d\le 3k$, let $x_d$ be the largest index such that $S[1,2^m]\circ w_1\circ w_2\circ\cdots\circ w_x$ has $d$-period $\pi_m$. 
 
Let $\mathcal{T}_m=i_1,i_2,\ldots,i_r$ in increasing order. 
Let $S\left[i_r+2^m+1,\frac{n}{2}+2^m\right]=v_1v_2\ldots v_sv'$, where each $v_i$ has length $\pi_m$ and let $y$ be the largest index such that $S[i_r+1,i_r+2^m]\circ v_1\circ v_2\circ\cdots\circ v_y$ has $3k$-period $\pi_m$. 

If $y=s$, then at most $k$ of the substrings $v_i$ can be unique by \obsref{obs:num:words}.
Moreover, by storing the fingerprints and positions of $\O{k^2\log n}$ substrings, as well as $v'$, we can recover the fingerprint of each $S[n-i_{j+1},n-i_j]$ by \lemref{lem:recover:prints}.
Thus, we keep the fingerprint of $S\left[\frac{n}{2}+1,n-i_r\right]$, and can construct the fingerprint of each $S\left[\frac{n}{2}+1,n-i_j\right]$

On the other hand if $y\neq s$, then for each $i_j$, let $\Delta$ be the number of indices $z$ such that $i_j\le z\le i_r$ and $S[z]\neq S[z+\pi_m]$. 
That is, $\Delta=|\{z|i_j\le z\le i_r, S[z]\neq S[z+\pi_m]\}|$.
Since $\pi_m$ is a $k$-period of $S[1,2^m]$, $\HAM{S[1,2^m],S[i_j+1,i_j+2^m]}\le k$, and each mismatch between $S[1,2^m]$ and $ S[i_j+1,i_j+2^m]$ can cause up to two indices $z$ such that $S[z]\neq S[z+\pi_m]$, then it follows that $0\le\Delta\le 3k$. 
Then if $y+|r-j|\neq x_{3k-\Delta}$, then $i_j\notin\mathcal{T}_m$, since $x_{3k-\Delta}$ is the largest index with $(3k-\Delta)$-period $\pi_m$, while $y$ is the largest index with $3k$-period $\pi_m$. 

Thus, for each $0\le\Delta\le 2k$, there is at most one index $j$ with $y+|r-j|\neq x_{2k+\Delta}$.
Again by \lemref{lem:recover:prints}, we can compute the fingerprint of $S\left[\frac{n}{2}+1,n-i_j\right]$ by storing the fingerprints and positions of $\O{k^2\log n}$ substrings.

Computing each $x_d$ requires determining $\pi_m$ and the fingerprint of $S[2^m-\pi_m+1,2^m]$.
Since $\pi_m\le\frac{2^m}{4}$, the algorithm determines $\pi_m$ by position $\pi_m+2^m<2^m-\pi_m+1$. 
Thus, the algorithm knows $\pi_m$ in time to start creating the fingerprint of $S[2^m-\pi_m+1,2^m]$. 

To compute $y$, we compute the fingerprint of $S[i_r+1,i_r+\pi_m]$. 
We then compute the fingerprint of each non-overlapping substring of length $\pi_m$ starting from $i_r+\pi_m$, and compare the fingerprint to the previous fingerprint. 
We only record the fingerprint of the most recent substring, but keep a running count of the number of mismatches.

\begin{theorem}
There exists a one-pass algorithm that outputs all the $k$-periods $p$ of a given string with $p\leq \frac{n}{2}$, and uses $\O{k^4\log^9 n}$ bits of space.
\end{theorem}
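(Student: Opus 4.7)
The plan is to bound separately the space used by the two parallel processes described in \secref{onepass}, namely the one handling $k$-periods $p \le n/4$ and the one handling $n/4 < p \le n/2$, and then sum.

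For the small-period process I would account for a single instance of the $k$-Mismatch algorithm run with pattern $S[1,n/2]$, which by \corref{cfp:algorithm} uses $\O{k^2\log^8 n}$ bits, and observe that the timing argument in the section ($i+n/2-1<3n/4\le n-i$ whenever $i\le n/4$) already shows that each candidate $i$ is discovered early enough to build the fingerprint of $S[1,n-i]$ online. The verification phase then invokes the compressed-fingerprint machinery from \secref{twopass}: by \lemref{recover:prints} together with \thmref{final}, at most $\O{k^2\log n}$ fingerprints of length $\pi_j$ suffice to recover the fingerprints of $S[1,n-p]$ and $S[p+1,n]$ for every candidate $p$ in a given interval $H_j$. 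Each fingerprint uses $\O{k\log^6 n}$ bits, each of the $\O{k\log n}$ intervals $H_j$ contributes up to $\O{k\log n}$ candidates, so this stage costs $\O{k^4\log^8 n}$ bits in total.

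For the large-period process I would first observe that $\log n$ parallel copies of the $k$-Mismatch algorithm are run, one for each dyadic choice $x=2^m$, giving an aggregate cost of $\O{k^2\log^9 n}$ bits for the matching step. For verification I would split along the two structural cases spelled out in the section: (a) if $\pi_m\ge 2^m/4$, then \obsref{apart} bounds $|\mathcal{T}_m|$ by $4$, so only $O(1)$ full fingerprints are kept per value of $m$, and each such fingerprint is $\O{k\log^6 n}$ bits; (b) if $\pi_m<2^m/4$, then the compressed representation stores $\O{k^2\log n}$ fingerprints of length $\pi_m$ per candidate set, and invoking \lemref{recover:prints} recovers the fingerprints of $S[\frac{n}{2}+1,n-i_j]$ for each $i_j\in\mathcal{T}_m$ from the fingerprint anchored at $i_r$. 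Summing over the $\log n$ scales $m$ and the $\O{k\log n}$ candidates per scale reproduces the same $\O{k^4\log^9 n}$ arithmetic as in the two-pass case, and combining both processes in parallel still fits inside $\O{k^4\log^9 n}$ bits.

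I expect the main obstacle in writing this out carefully to be the verification side of case (b): one must check that it is legal to compute $y$ and the quantities $x_d$ online, i.e., that $\pi_m$ is determined from $S[1,2^m]$ early enough that the fingerprint of $S[2^m-\pi_m+1,2^m]$ and the fingerprints of the subsequent length-$\pi_m$ blocks can all be built in a single pass, and that the per-candidate bookkeeping stored once $i_r$ is known truly suffices to reconstruct the fingerprints needed to test every earlier $i_j\in\mathcal{T}_m$. This is essentially a timing and bookkeeping argument combining \obsref{num:words}, \lemref{recover:prints}, and the streaming-feasibility discussion right before the theorem; once the timing is verified, the $\O{k^4\log^9 n}$ bound follows from exactly the same product of factors as in the two-pass analysis.
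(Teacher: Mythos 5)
Your proposal follows essentially the same route as the paper's proof: bound the two parallel processes separately, charge $\O{k^2\log^8 n}$ (resp.\ $\O{k^2\log^9 n}$ over the $\log n$ scales) for the $k$-Mismatch instances, and multiply the $\O{k^2\log n}$ fingerprints per candidate set by the $\O{k\log^6 n}$ cost per fingerprint and the $\O{k\log n}$ candidates to get $\O{k^4\log^9 n}$. Your accounting for the small-period process is slightly looser than the paper's ($\O{k^4\log^8 n}$ versus $\O{k^3\log^7 n}$), but this is absorbed by the large-period term, and your flagged concern about the online computability of $\pi_m$, $y$, and the $x_d$ is exactly the timing argument the paper handles in the discussion preceding the theorem rather than in the proof itself.
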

\begin{proof}
The process for small $k$-periods uses $\O{k^2\log^8 n}$ bits of space determining $\mathcal{T}$. 
Verifying whether an index in $\mathcal{T}$ is actually a $k$-period requires the fingerprints of $\O{k^2\log n}$ substrings, each using $\O{k\log^6 n}$ bits of space (\thmref{thm:cfp:datastructure}). This adds up to a total of $\O{k^3\log^7 n}$ bits of space.

The process for large $k$-periods has $\log n$ parallel instances of the $k$-Mismatch algorithm to compute $\mathcal{T}_m$ for $1\le m\le\log n$, using $\O{k^2\log^9 n}$ bits of space. 
To reconstruct the fingerprint of $S[1,n-i]$ for each $i\in\mathcal{T}_m$ the algorithm needs to store the fingerprints of at most $\O{k^2\log n}$ unique substrings (\lemref{lem:recover:prints}). 
Each fingerprint uses $\O{k\log^6 n}$ bits of space (\thmref{thm:cfp:datastructure}) and there can be up to $\O{k\log n}$ indices in $\mathcal{T}_m$. 
This adds up to a total of $\O{k^4\log^9 n}$ bits of space.

Thus, $\O{k^4\log^9 n}$ bits of space suffice for both processes.
\end{proof}

\section{Structural Properties of $k$-Periodic Strings}
\seclab{sec:structural}
In this section, we detail the necessary steps in proving \thmref{thm:final}.
\newline\noindent
\begin{remindertheorem}{\thmref{thm:final}}
\thmfinal
\end{remindertheorem}

We first show \thmref{thm:main}, which assumes there are only two candidate $k$-periods and both are small. 
We then relax these conditions and prove \thmref{thm:general:gcd}, which does not restrict the number of candidate $k$-periods, but still assumes that their  magnitudes are small. 
 \thmref{thm:final} considers all candidate $k$-periods in some interval.  We use the fact that the \emph{difference} between these candidates is small, thus meeting the conditions of \thmref{thm:general:gcd}, although with an increase in the number of mismatches.

To show that the greatest common divisor $d$ of any two reasonably small candidates $p<q$ for $k$-periods is also a $(16k^2+1)$-period (\thmref{thm:main}), we consider the cases where either all candidates are less than $(2k+1)d$ (\lemref{lem:mainSmall}) or some candidate is at least $(2k+1)d$ (\lemref{lem:mainBig}). 

In the first case, where all candidate period are less than $(2k+1)d$, we partition the string into disjoint intervals of a certain length, followed by partitioning the intervals further into congruence classes. 
We show in \lemref{lem:hop} that any partition which contains an index $i$ such that $S[i]\neq S[i+d]$ must also contain an index $j$ which is a mismatch from some symbol $p$ or $q$ distance away. 
Since there are at most $2k$ indices $j$, we can then bound the number of such partitions, and then extract an upper bound on the number of such indices $i$.

In the second case, where some candidate is at least $(2k+1)d$, our argument relies on forming a grid (such as in \figref{fig:lattice}) where adjacent points are indices which either differ by $p$ or $q$. 
We include $2k+1$ rows and columns in this grid.
Since $\frac{q}{d}\ge 2k+1$, then no index in $S$ is represented by multiple points in the grid.
We call an edge between adjacent points ``bad'' if the two corresponding indices form a mismatch.

\begin{observation}
\obslab{obs:path:mismatches}
$S[i]\neq S[i+d]$ only if each path between $i$ and $i+d$ contains a bad edge. 
\end{observation}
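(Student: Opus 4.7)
My plan is to prove this by contraposition: if some path from $i$ to $i+d$ in the grid uses only good edges, then $S[i] = S[i+d]$, which contradicts the hypothesis $S[i]\neq S[i+d]$. Writing the observation this way isolates the only content of the claim, namely that a good edge is a certificate of character equality.

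First, I would recall the edge convention. By construction a grid edge joins two vertices whose associated indices differ by exactly $p$ or $q$, and the edge is declared bad precisely when the corresponding pair of indices forms a mismatch, i.e.\ when the two characters of $S$ at those indices disagree. Equivalently, an edge is good iff $S$ agrees at its two endpoints. So if $i = j_0, j_1, \ldots, j_\ell = i+d$ is a hypothetical all-good path, then $S[j_{t-1}] = S[j_t]$ for every $1 \le t \le \ell$, and telescoping yields $S[i] = S[j_0] = S[j_1] = \cdots = S[j_\ell] = S[i+d]$, delivering the contradiction.

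I do not anticipate a genuine obstacle: the observation is essentially transitivity of equality along a certified path of length $\ell$. The one background condition worth making explicit is the injectivity of the ``grid vertex to index of $S$'' assignment on the $(2k+1)\times(2k+1)$ grid; in the regime $q/d \ge 2k+1$ under consideration (which is precisely the case treated by \lemref{lem:mainBig}) this injectivity is automatic, so the phrase ``the path between $i$ and $i+d$'' is unambiguous, and there is no risk that a traversal secretly revisits the same index of $S$ under a different grid label. It is also worth noting that by Bezout at least one such path exists in an appropriately sized grid, so the observation is not vacuous in the cases where it will be invoked. With these points cleared, the proof is a one-line telescoping argument.
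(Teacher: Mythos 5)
Your proposal is correct and matches the intended justification: the paper states this as an observation without proof, and the contrapositive telescoping argument (a good edge certifies equality of the two characters, so an all-good path forces $S[i]=S[i+d]$) is exactly the reasoning the paper relies on. Your side remarks on injectivity of the vertex-to-index map and on the existence of at least one path are consistent with \claimref{claim:unique} and \lemref{lem:hop} in the paper.
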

Our grid contains at most $2k$ bad edges, since $p$ and $q$ are both $k$-periods, and each index is represented at most once. 
We then show that for all but at most $(16k^2+1)$ indices $i$, there exists a path between indices $i$ and $i+d$ that avoids bad edges.
Therefore, there are at most $(16k^2+1)$ indices $i$ such that $S[i]\neq S[i+d]$, which shows that $d$ is an $(16k^2+1)$-period.

Before proving \lemref{lem:mainSmall}, we first show  that given integers $i,p,q$, we can repeatedly hop by distance $p$ or $q$, starting from $i$, ending at $i+\gcd{p,q}$, all the while staying in a ``small'' interval. 
\begin{lemma}
\lemlab{lem:hop}
Suppose $p<q$ are two positive integers with $\gcd{p,q}=d$. 
Let $i$ be an integer such that $1\le i\le p+q-d$. 
Then there exists a sequence of integers $i=t_0,\ldots,t_m=i+d$ where $|t_i-t_{i+1}|$ is either $p$ or $q$, and $1\le t_i<p+q$. 
Furthermore, each integer is congruent to $i\pmod{d}$. 
In other words, any interval of length $p+q$ which contains indices $i,i+d$ such that $S[i]\neq S[i+d]$ also contains an index $j$ such that either $S[j]\neq S[j+p]$ or $S[j]\neq S[j+q]$. 
\end{lemma}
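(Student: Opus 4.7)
The plan is to build the walk explicitly via Bezout's identity. Since $\gcd{p,q} = d$, there exist positive integers $a,b$ with $ap - bq = d$, taken minimal so that $a < q/d$ and $b < p/d$. I will interleave exactly $a$ steps of $+p$ and $b$ steps of $-q$; the total displacement is then $ap - bq = d$, yielding $t_m = i+d$, and each $t_j \equiv i \pmod d$ since $d \mid p$ and $d \mid q$.

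For the interleaving I use a greedy rule: $t_{j+1} = t_j + p$ whenever $t_j + p < p+q$, and $t_{j+1} = t_j - q$ otherwise. Because $+p \equiv -q \pmod{p+q}$, this is exactly the $\mathbb{Z}$-lift of the cyclic iteration $x \mapsto x + p$ on $\mathbb{Z}/(p+q)\mathbb{Z}$. Since $\gcd{p, p+q} = d$, that iteration is a single cycle of length $(p+q)/d$ on the coset $i + d\,\mathbb{Z}/(p+q)\mathbb{Z}$, and by the minimality of $(a,b)$ it reaches $i+d$ after exactly $a+b$ steps without revisiting any vertex. The upper bound $t_j < p+q$ is built into the rule; the lower bound only matters at subtraction steps, where $t_j \ge q$ and $t_j \equiv i \pmod d$. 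Since $d \mid q$, one has $t_j \ne q$, hence $t_j - q \ge 1$, whenever $d \nmid i$.

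The main obstacle is the edge case $d \mid i$, in which the walk can get stuck at $t_j = q$ (giving $t_j - q = 0$). To handle it I will run the symmetric greedy coming from the dual Bezout relation $\hat b \, q - \hat a \, p = d$: add $q$ when possible, else subtract $p$. This alternative walk can stall only at $t_j = p$. Because the $+q$ cyclic iteration in $\mathbb{Z}/(p+q)\mathbb{Z}$ is the reverse of the $+p$ iteration, the positions of $p$, $q$, and the target $i+d$ along the two cyclic orbits are reflected with respect to each other, and a short case analysis shows at least one of the two orientations reaches $i+d$ before hitting its blocking vertex; the construction then uses whichever greedy succeeds.

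The ``In other words'' consequence is then immediate: if a window of length $p+q$ contains indices $i$ and $i+d$ with $S[i] \ne S[i+d]$, translate the walk so that it lies in this window. Since $S[t_0] \ne S[t_m]$, some consecutive pair $(t_j, t_{j+1})$ must satisfy $S[t_j] \ne S[t_{j+1}]$, and because $|t_j - t_{j+1}| \in \{p,q\}$ with both indices in the window, $\min(t_j, t_{j+1})$ is the desired mismatch index at distance $p$ or $q$.
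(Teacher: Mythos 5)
Your proof is correct and follows essentially the same route as the paper's: a greedy Bezout walk that adds $p$ while the result stays below $p+q$ and otherwise subtracts $q$, with the dual ($+q$, else $-p$) greedy covering the symmetric case. If anything, yours is the more careful version --- the paper glosses over exactly the stuck-at-$t_j=q$ issue you isolate (and the boundary $i=p+q-d$, where $i+d=p+q$ makes the stated bound $t_m<p+q$ unattainable for either walk), and your deferred ``short case analysis'' does check out: in the cyclic $+p$ orbit on $\mathbb{Z}/(p+q)\mathbb{Z}$ the forbidden element $0$ is immediately preceded by $q$ and immediately followed by $p$, so exactly one of the two orientations reaches $i+d$ before its blocking vertex unless $i+d\equiv 0\pmod{p+q}$.
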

\begin{proof}
Since $d$ is the greatest common divisor of $p$ and $q$, then there exist integers $a,b$ such that $ap+bq=d$. 
Suppose $a>0$. 
Then consider the sequence $t_i=t_{i-1}+p$ if $1\le t_{i-1}\le q$. 
Otherwise, if $t_{i-1}>q$, let $t_i=t_{i-1}-q$. 
Then clearly, each $|t_i-t_{i+1}|$ is either $p$ or $q$, and $1\le t_i<p+q$. 
That is, each $t_i$ either increases the coefficient of $p$ by one, or decreases the coefficient of $q$ by one. 
Thus, at the last time the coefficient of $p$ is $a$, $t_i=ap+bq=d$, since any other coefficient of $q$ would cause either $t_i>q$ or $t_i<1$. 
Hence, terminating the sequence at this step produces the desired output, and a similar argument follows if $b>0$ instead of $a>0$. 
Since $p\equiv q\equiv 0\pmod{d}$, then all integers in these sequence are congruent to $i\pmod{d}$.
\end{proof}
We now prove that the greatest common divisor $d$ of any two reasonably small candidates $p,q$ for $k$-periods is also a $(16k^2+1)$-period.
\begin{theorem}
\thmlab{thm:main}
For any $1\leq x \leq \frac{n}{2}$, let $\mathcal{I}=\left\{i\,\middle|i\le\frac{x}{4k+2},\HAM{S[1,x],S[i+1,i+x]}\le k\right\}$. 
For any two $p,q\in\mathcal{I}$ with $p<q$, their greatest common divisor, $d=\gcd{p,q}$ satisfies
\[\HAM{S[1,x],S[d+1,d+x]}\le (16k^2+1).\]
\end{theorem}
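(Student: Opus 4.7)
The plan is to split the proof into two regimes depending on the size of $q$ relative to $d = \gcd{p,q}$: a \emph{small} case where $q < (2k+1)d$ and a \emph{large} case where $q \geq (2k+1)d$. In both regimes the goal is to bound the number of $d$-mismatches (positions $i$ with $S[i] \ne S[i+d]$) by exploiting the fact that $p$ and $q$ are $k$-periods, which together contribute at most $2k$ $p$- or $q$-mismatches to charge against.

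In the small case, I would apply \lemref{lem:hop}: every window of length $p+q$ that contains both $i$ and $i+d$ also contains a witness index $j$ in the same residue class modulo $d$ with $S[j] \ne S[j+p]$ or $S[j] \ne S[j+q]$. My plan is to tile $[1,x]$ by disjoint windows of length $p+q$ (which is legal since $p+q < 2(2k+1)d$ and $p,q \le x/(4k+2)$) and then partition each window further into residue classes modulo $d$. Each occupied (window, residue class) pair contains at most $(p+q)/d - 1 = O(k)$ $d$-mismatches but must also contain at least one $p$- or $q$-mismatch. Since there are at most $2k$ such mismatches available, this gives a total of $O(k^2)$ $d$-mismatches, and a careful tracking of constants yields the target bound of $16k^2$.

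In the large case, I would build a two-dimensional grid of $(2k+1) \times (2k+1)$ nodes where node $(a,b)$ represents the index $i+ap+bq$, horizontal edges correspond to shifts by $q$, and vertical edges to shifts by $p$. The condition $q \geq (2k+1)d$ guarantees that distinct nodes correspond to distinct indices of $S$, so each $p$- or $q$-mismatch marks at most one edge as \emph{bad}, giving at most $2k$ bad edges. By \obsref{obs:path:mismatches}, $S[i] \ne S[i+d]$ only if every path in the grid from the node representing $i$ to the node representing $i+d$ uses a bad edge, i.e., $i$ is topologically enclosed by the union of bad edges. A planar isoperimetric argument on the grid (illustrated by \figref{fig:lattice} and \figref{fig:side:lattice}) then bounds the number of enclosed nodes by $16k^2$, which caps the number of $d$-mismatches.

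The main obstacle will be making the enclosure argument in the large case fully rigorous. One needs to verify (i) that the nodes representing $i$ and $i+d$ actually lie in the grid for every $i$ of interest, (ii) that enclosure by bad edges is the only way to block every good path from $i$ to $i+d$, and (iii) that the total area enclosed by $2k$ bad edges is at most $16k^2$, including the subtle boundary case where bad edges can combine with the outer grid boundary to enclose a region, as in \figref{fig:side:lattice}. Once these geometric facts are established, combining the two cases gives the stated bound of $16k^2+1$, with the extra $+1$ absorbing a single boundary-related slack.
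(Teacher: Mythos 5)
Your proposal follows essentially the same route as the paper: the same case split on whether $q<(2k+1)d$ or $q\ge(2k+1)d$, the same use of \lemref{lem:hop} with a tiling into length-$(p+q)$ windows and residue classes modulo $d$ in the small case, and the same $(2k+1)\times(2k+1)$ grid with bad edges, enclosed regions, and the boundary variant of \figref{fig:side:lattice} in the large case. The obstacles you flag (distinctness of grid nodes, the enclosure/area bound, and the grid-boundary subtlety) are exactly the points the paper addresses via \claimref{claim:unique}, \claimref{claim:few:bad}, \claimref{claim:bad:path}, and \obsref{obs:rectangle}.
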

We now proceed to the proof of \thmref{thm:main} for the case $q<(2k+1)d$. 
\begin{lemma}\lemlab{lem:mainSmall}
\thmref{thm:main} holds when $q<(2k+1)d$.
\end{lemma}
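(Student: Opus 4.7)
My plan is to bound $|M_d| \le 16k^2 + 1$, where $M_d := \{\,i \in [1,x] : S[i] \ne S[i+d]\,\}$, by charging each $i \in M_d$ to a witness in $M_p \cup M_q$ (defined analogously) and then bounding the multiplicity of each witness. Since $p, q \in \mathcal{I}$, the sets $M_p$ and $M_q$ each have size at most $k$, so $|M_p \cup M_q| \le 2k$.

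For each $i \in M_d$, I will apply \lemref{lem:hop} to the translated window $W_i := [i,\,i+p+q-1]$ (formally, by invoking \lemref{lem:hop} on the integer $1$ after shifting coordinates by $i-1$). Before doing so I must verify that $W_i \subseteq [1,n]$: the bound $p, q \le x/(4k+2)$ built into the definition of $\mathcal{I}$ together with $x \le n/2$ gives $i + p + q - 1 \le x + 2q \le x + x/(2k+1) < 2x \le n$. The translated lemma then produces a sequence $i = t_0, t_1, \dots, t_m = i+d$ lying entirely in $W_i$, with each step of size $p$ or $q$ and every $t_\ell \equiv i \pmod{d}$. Because $S[t_0] \ne S[t_m]$, at least one consecutive pair $(t_\ell, t_{\ell+1})$ must satisfy $S[t_\ell] \ne S[t_{\ell+1}]$; taking $j := \min(t_\ell, t_{\ell+1})$ yields an index $j \in M_p \cup M_q$ with $j \equiv i \pmod{d}$ and $j \in W_i$.

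Next, I will bound the preimage of each witness. Fixing $j \in M_p \cup M_q$, any $i \in M_d$ charged to $j$ must satisfy $i \equiv j \pmod d$ and $j - (p+q) + 2 \le i \le j$, so the number of such $i$ is at most $\lfloor (p+q-2)/d \rfloor + 1$. Using the hypothesis $q < (2k+1)d$ together with $d \mid p$, $d \mid q$, and $p < q$, one checks that $(p+q)/d \le 4k+1$, so each witness is charged by at most $4k+2$ elements of $M_d$. Summing gives
\[
|M_d| \;\le\; |M_p \cup M_q| \cdot (4k+2) \;\le\; 2k(4k+2) \;=\; 8k^2 + 4k \;\le\; 16k^2 + 1,
\]
as desired.

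The main obstacle I anticipate is the careful bookkeeping around the shifted invocation of \lemref{lem:hop}: the lemma is stated only for the base window $[1,p+q)$, so the coordinate translation and the check that none of the $t_\ell$ (nor the neighbors $t_\ell \pm p$, $t_\ell \pm q$ used to extract the witness $j$) fall outside $[1,n]$ are where most of the care is required. Once that is set up, the rest of the argument is a routine double count organized by congruence class modulo $d$.
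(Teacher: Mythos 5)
Your argument is in substance the paper's own proof, reorganized: the paper partitions $[1,x]$ into length-$(p+q)$ intervals (using two offset families of intervals so that $i$ and $i+d$ always share one), applies \lemref{lem:hop} to extract a witness in each congruence class containing a $d$-mismatch, and concludes by pigeonhole that $\geq 2k+1$ distinct witnesses would be forced, contradicting $|M_p\cup M_q|\le 2k$; your sliding window $W_i$ together with the multiplicity bound $(p+q)/d\le 4k+2$ per witness is the same count run directly rather than by contradiction, and it cleanly dispenses with the two-offset-families device. The one check that needs repair is not $W_i\subseteq[1,n]$ but that the witness $j$ lands in $[1,x]$: the hypothesis $\HAM{S[1,x],S[p+1,p+x]}\le k$ only bounds mismatches $S[j]\ne S[j+p]$ for $j\le x$, and with the window placed to the right of $i$ the witness can fall in $(x,\,x+q)$ whenever $i>x-(p+q)$ --- a range of up to $p+q\approx x/(2k+1)\gg 16k^2$ indices, so these $i$ cannot simply be discarded. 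The fix is routine (for such $i$, translate \lemref{lem:hop} so the window \emph{ends} at $i+d$ rather than starts at $i$, which forces $j\le i+d-p\le x$), and the paper's own writeup of \lemref{lem:mainSmall} glosses over the same boundary point, so this is a repair rather than a change of approach.
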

\begin{proof}
If $x\le16k^2$, then clearly there are at most $16k^2$ indices $i$ such that $S[i]\neq S[i+d]$, and so $d$ is a $(16k^2+1)$-period. 
Otherwise, suppose $x>16k^2+1$, and by way of contradiction, that there are at least $16k^2+1$ indices $i$ such that $S[i]\neq S[i+d]$.

Consider the following two classes of intervals of length $\frac{p+q}{2}$: $\mathcal{I}_1=\left[1,\frac{p+q}{2}\right]$, $\left[p+q+1,\frac{3(p+q)}{2}\right]$, $\left[2(p+q)+1,\frac{5(p+q)}{2}\right]$, $\ldots$ and $\mathcal{I}_2=\left[\frac{p+q}{2}+1,p+q\right]$,$\left[\frac{3(p+q)}{2}+1,2(p+q)\right]$, $\left[\frac{5(p+q)}{2}+1,3(p+q)\right]$, $\ldots$. 
If there are at least $16k^2+1$ indices $i$ such that $S[i]\neq S[i+d]$, then either $\mathcal{I}_1$ or $\mathcal{I}_2$ contains at least $8k^2+1$ of these indices. 

Suppose $\mathcal{I}_1$ has at least $8k^2+1$ indices $i$ such that $S[i]\neq S[i+d]$. 
Now, consider the disjoint intervals of length $p+q$: $[1,p+q]$, $[p+q+1,2(p+q)]$, $[2(p+q)+1,3(p+q)]$, $\ldots$. 
Furthermore, for each of these intervals, consider the congruence classes modulo $d$. 
Since $x>16k^2+1$ and each of these congruence classes within an intervals have $\frac{p+q}{d}<\frac{2q}{d}\le2(2k)=4k$ indices, then $S[1,x]$ certainly contains at least $2k+1$ of these congruence classes.

If $\mathcal{I}_1$ has at least $8k^2+1$ indices $i$ such that $S[i]\neq S[i+d]$ and each congruence class within an interval contains less than $4k$ indices, then there are at least $2k+1$ congruence classes containing such an index $i$. 
Because each of these indices occur within $\mathcal{I}_1$, it follows that both $i$ and $i+d$ are contained within the interval (and therefore, the same congruence class). 
By \lemref{lem:hop}, each congruence class within an interval containing indices $i$ and $i+d$ $S[i]\neq S[i+d]$ also contains an index $j$ such that either
$S[j]\neq S[j+p]$ or $S[j]\neq S[j+q]$. 
Since there are at least $2k + 1$ congruence classes within intervals, then there are at least $2k+1$ such indices $j$. 
This either contradicts that there are at most $k$ indices $j$ such that $S[j]\neq S[j+p]$ or that there are at most $k$ indices $j$ such that $S[j] \neq S[j+q]$.

The proof for the case where $\mathcal{I}_2$ has at least $8k^2+1$ indices $i$ such that $S[i]\neq S[i+d]$ is symmetric.
\end{proof}
The following lemma considers the case where at least one of candidate periods $p$ or $q$ is at least $(2k+1)d$. 
Without loss of generality, assume $q\ge(2k+1)d$. 
We form a grid, such as in \figref{fig:lattice}, where adjacent points in the grid correspond to indices which either differ by $p$ or $q$. 
An edge between adjacent points is ``bad'' if the two corresponding indices form a mismatch. 
Otherwise, we call an edge an ``good''.

From \obsref{obs:path:mismatches}, $S[i]\neq S[i+d]$ only if each path between $i$ and $i+d$ contains a bad edge. 
Thus, if $S[i]\neq S[i+d]$, then the point in the grid corresponding to $i$ must be contained in some region whose boundary is formed by bad edges. 
We partition the indices into congruence classes modulo $d$, count the number of mismatches in each class, and aggregate the results.

That is, in a particular congruence class, we assume $p$ is a $k_1$-period, and $q$ is a $k_2$-period, where $k_1,k_2\le k$. 
Then the grid contains at most $k_1+k_2$ bad edges, which bounds the perimeter of the regions. 
From this, we deduce a generous bound of $(16k_1k_2+1)$ on the number of points inside these regions, which is equivalent to the number of indices $i$ such that $S[i]\neq S[i+d]$ in the congruence class.
We then aggregate over all congruence classes to show that $d$ is a $(16k^2+1)$-period.

\begin{lemma}
\lemlab{lem:mainBig}
Let $p\le q$ and $k$ be positive integers with $q\ge(2k+1)d$ and let $d=\gcd{p,q}$.  
Given a string $S$ and an integer $0\le m<d$, let there be $k_1>0$ indices $i\equiv m\pmod{d}$ such that $S[i]\neq S[i+p]$ and $k_2>0$ indices $i\equiv m\pmod{d}$, not necessarily disjoint, such that $S[i]\neq S[i+q]$ and $k_1,k_2\le k$. 
If $d=\gcd{p,q}$, then there exist at most $k_1k_2$ indices $i\equiv m\pmod{d}$ such that $S[i]\neq S[i+d]$.
\end{lemma}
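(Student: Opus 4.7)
The approach is the grid encoding sketched in the two paragraphs immediately preceding the lemma. Fix the congruence class $m \pmod d$, let $P = p/d$ and $Q = q/d$ (so $\gcd{P,Q} = 1$ and $Q \ge 2k+1$ by the hypothesis $q \ge (2k+1)d$), fix a base index $i_0 \equiv m \pmod d$, and assign to each pair $(a,b) \in \mathbb{Z}^2$ the string index $i(a,b) = i_0 + ap + bq$. Adjacent grid points differ by $p$ (via a $p$-edge) or $q$ (via a $q$-edge), and we call an edge \emph{bad} if the two corresponding indices form an $S$-mismatch and \emph{good} otherwise. A short verification shows that $(a,b)$ and $(a',b')$ represent the same string index iff $(a-a',b-b')$ is an integer multiple of $(Q,-P)$; so any grid window of $a$-width at most $2k+1 \le Q$ gives an injective representation, and in particular it contains at most $k_1$ bad $p$-edges and $k_2$ bad $q$-edges.

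By \obsref{path:mismatches}, whenever $S[i] \ne S[i+d]$ the grid points of $i$ and of $i+d$ (separated by a fixed B\'ezout vector $(\alpha,\beta)$ with $\alpha p + \beta q = d$) lie in different connected components of the good subgraph. Hence the good component $C$ containing the point of $i$ is \emph{enclosed}, in the sense that any path leaving $C$ must cross a bad edge. The key combinatorial step is an isoperimetric-style bound of the form $|C| \le b_p(C)\cdot b_q(C)$, where $b_p(C)$ and $b_q(C)$ denote the bad $p$- and $q$-edges on the boundary of $C$. Informally, each column occupied by $C$ is capped above and below by bad $p$-edges, and each row of $C$ is capped on the left and right by bad $q$-edges, so the number of distinct rows (respectively columns) of $C$ is controlled by $b_q(C)$ (respectively $b_p(C)$), which yields the product bound. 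Since each bad edge borders at most one enclosed component, summing across all enclosed components gives $\sum_C |C| \le k_1 k_2$, and every $d$-mismatch $i \equiv m \pmod d$ lies in some enclosed component, which proves the lemma.

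The main obstacle is making the rectangular-area estimate $|C| \le b_p(C)\cdot b_q(C)$ rigorous and simultaneously controlling the periphery of the valid-index region, which acts as a pseudo-boundary of the grid. One must ensure that an enclosed component cannot surreptitiously ``leak'' through phantom grid edges whose endpoints correspond to indices outside $[1,n]$, artificially inflating or deflating the bad-edge count. This is where the hypothesis $q \ge (2k+1)d$ is critical: a $(2k+1)$-wide strip gives the unique-representation property and lets each bad edge be attributed unambiguously to one enclosed component. A careful combinatorial bookkeeping of these boundary effects, together with the isoperimetric core, constitutes the bulk of the formal argument.
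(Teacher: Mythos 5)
Your proposal follows essentially the same route as the paper: the same $(a,b)\mapsto i_0+ap+bq$ grid encoding, the same injectivity argument from $q\ge(2k+1)d$, and the same ``enclosed region bounded by bad edges'' area count, with the remaining work you flag (the rigorous rectangular bound and the valid-index boundary) being exactly what the paper supplies via its no-change row/column claims, \obsref{obs:rectangle}, and the side-lattice case. The one point to tighten is your assertion that the good component containing $i$ itself is always enclosed: one of the two components of $i$ and $i+d$ may be the large one that reaches no-change rows and columns, so (as in the paper's \claimref{claim:bad:path}) each mismatch must be charged to whichever of its two endpoints lies in a small enclosed region, which costs only a factor of $2$ in the count.
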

\begin{proof}
Consider a pair of indices $(i,i+d)$ with $S[i]\neq S[i+d]$ in congruence class $m\pmod{d}$. 
We ultimately want to build a grid of ``large'' size around $i$, but this may result in illegal indices if $i$ is too small or too large. 
Therefore, we first consider the case where $k(p+q)\le i\le x-k(p+q)$, where we can place $i$ in the center of the grid. 
We then describe a similar argument with modifications for $i<k(p+q)$ or $i>x-k(p+q)$, when we must place $i$ near the periphery of the grid.

Given index $i$ with $k(p+q)\le i\le x-k(p+q)$, we define a $(2k+1)$-{\em grid centered at $i$} on a subset of indices of $S[1,x]$ as follows: the node at the center of the grid is $i$ and for any node $j$, the nodes $j+p$, $j+q$, $j-p$ and $j-q$ are the top, right, bottom and left neighbors of $j$, respectively. 
We include $(2k+1)$ rows and columns in this grid, so that $i$ is the intersection of the middle row and the middle column. 
See \figref{fig:lattice} for example of such a grid. 
\figlattice
Note that since $k(p+q)\le i\le x-k(p+q)$, all points in the grid correspond to indices of $S$. 
\begin{claim}
\claimlab{claim:unique}
The points in a $(2k+1)$-grid centered at $i$ correspond to distinct indeces in $S$.
\end{claim}
\begin{proof}
Suppose, by way of contradiction, there exists some index $j$ which is represented by multiple points in the grid. 
That is, $j=i+a_1p+b_1q=i+a_2p+b_2q$ with $a_1\neq a_2$. 
Since $d=\gcd{p,q}$, there exist integers $r,s$ with $p=rd$, $q=sd$, and $\gcd{r,s}=1$. 
Then $(a_1-a_2)p=(b_2-b_1)q$ so $(a_1-a_2)r=(b_2-b_1)s$. 
Because $\gcd{r,s}=1$, it follows that $(a_1-a_2)$ is divisible by $s=\frac{q}{d}\ge2k+1$. 
Therefore, $|a_1-a_2|\ge 2k+1$, and so $a_1$ and $a_2$ are at least $2k+1$ columns apart. 
However, this contradicts both points being in the grid, since the grid contains exactly $2k+1$ columns.
\end{proof}
\begin{claim}
\claimlab{claim:few:bad}
In each $(2k+1)$-grid there exist at least $k+1$ rows and $k+1$ columns in the grid that do not contain any bad edge.
\end{claim}
\begin{proof}
Since $\HAM{S[1,x],S[\alpha+1,\alpha+x]}\leq k$, for $\alpha=p,q$, there are at most $k$ indices $i$ for which $S[i]\neq S[i+p]$ or $S[i]\neq S[i+q]$. 
By \claimref{claim:unique}, each index is represented at most once.
Hence, there are at most $k$ vertical bad edges and at most $k$ horizontal bad edges in this grid.
Because the grid contains $2k+1$ rows and columns, then there exist at least $k+1$ rows and columns in the grid that do not contain any bad edge.
\end{proof}
We say that a row with no bad edges in a  $(2k+1)$-grid is a  \emph{no-change row}. We define a \emph{no-change column} similarly. 
\begin{claim}
\claimlab{claim:bad:path}
Suppose the following hold:
\begin{enumerate}
\item
There exists a path avoiding bad edges between $i$ and a no-change row or column in a $(2k+1)$-grid containing $i$.
\item
There exists a path avoiding bad edges between $i+d$ and a no-change row or column in a $(2k+1)$-grid containing $i+d$.
\end{enumerate}
Then there exists a path between $i$ and $i+d$ avoiding bad edges.
\end{claim}
\begin{proof}
Consider the two $(2k+1)$-grids centered at $i$ and $i+q$. 
By \claimref{claim:few:bad}, there are at least $k+1$ no-change rows in each grid, but the two grids overlap in $2k+1$ rows. 
Thus, some no-change row in the grid centered at $i$ must also be a no-change row in the grid centered at $i+q$. 
Similarly, some no-change column in the grid centered at $i$ must also be a no-change column in the grid centered at $i+p$. 
These common no-change rows and columns allow traversal between grids, as we can freely traverse between any no-change rows and columns while avoiding bad edges. 
Thus, if we can traverse from $i$ to any no-change row in the first grid, we can ultimately reach any no-change row in the final grid containing $i+d$ while avoiding all bad edges. 
Finally, if we can traverse between $i+d$ and any no-change row in the final grid, then there exists a path between $i$ and $i+d$ without any bad edges.
\end{proof}

By the contrapositive of \claimref{claim:bad:path}, it follows that if $S[i]\neq S[i+d]$,  then either the $(2k+1)$-grid centered at $i$ or the $(2k+1)$-grid centered at $i+d$, has no path that avoids bad edges from the center of the grid to a no-change row or no-change column. 
Suppose without loss of generality that all paths from $i$ to a no-change row/column within the $(2k+1)$-grid centered at $i$ has some bad edge. 
Define an {\em enclosed region} containing $i$ within the $(2k+1)$-grid centered at $i$ to be the set of points reachable from $i$ on paths containing only good edges. See \figref{fig:lattice} for an example.

Thus, to bound the number of indices $i$ such that $S[i]\neq S[i+d]$, it suffices to bound the number of points enclosed in such regions, which are themselves all contained within $(2k+1)$-grids.

We next argue that the number of unique indices that can be enclosed with $k_1$ vertical edges and $k_2$ horizontal edges is at most $\frac{k_1k_2}{2}$, even on an extended grid with no boundaries and multiple vertices which correspond to the same index.  
\begin{lemma}
\lemlab{lem:isolated}
The number of mismatches $(i,i+d)$, for $k(p+q)\le i\le x-k(p+q)$, is at most $\frac{k_1k_2}{2}$.
\end{lemma}

\begin{proof}
The proof follows from the following observations.
\begin{observation}
Two sets of indices represented by the grid points in two enclosed regions are either identical or completely disjoint.
\end{observation}
\noindent
Henceforth, we consider only one representative for each enclosed region.  
\begin{observation}
\obslab{obs:rectangle}
If two enclosed regions consist of sets of grid points $I_1$ and $I_2$, respectively, then there is a way to enclose at least $|I_1|+|I_2|$ many points in the grid using at most the same number of edges as the number of edges bounding the two enclosed regions, but such that these edges form only one enclosed region. Moreover, the new enclosed region can be made convex, and in particular a rectangle.
\end{observation}
Specifically, the rectangle contains at most $k_1$ vertical bad edges and at most $k_2$ horizontal bad edges. 
Because the total area (defined as the number of grid points) of regions enclosed by at most $k_1$ vertical bad edges and at most $k_2$ horizontal bad edges is at most $\frac{k_1k_2}{4}$, then the number of enclosed nodes cannot exceed $\frac{k_1k_2}{4}$. 
Therefore, the number of $(i,i+d)$ mismatches is at most double the number of enclosed nodes (if $i$ is enclosed, both $(i,i+d)$ and $(i-d,i)$ may be mismatches), and the proof is complete for the case $k(p+q)\le i\le x-k(p+q)$. 
Refer to \figref{fig:lattice} for an example.
\end{proof}
We now describe the cases where $i<k(p+q)$ and $i>x-k(p+q)$. 
The problem with the above grid for these values of $i$ is that many points in the grid either have value less than $0$ or greater than $x$. 
These points correspond to illegal indices, as $S[j]$ for $j<0$ or $j>x$ is nonsensical. 
Hence, we simply change the construction so that we still use $2k+1$ rows and columns in total, but that $i$ appears in the bottom left corner for $i<k(p+q)$. 
On the other hand, if $i>x-k(p+q)$, then we construct our grid so that $i$ appears in the top right corner. 
Once again, since $q\ge(2k+1)d$ and the grid contains $2k+1$ rows and columns, then each index appears at most once inside the grid. 

In both cases, the boundary of the grid serves to help enclose an area containing $i$. 
Thus, any node can be enclosed by a combination of the boundary of the grid and a number of bad edges. 
However, the boundary of the grid can be at most half of the entire perimeter of an enclosed region. 
The remaining half of the perimeter consists of at most $k_1$ vertical bad edges and $k_2$ horizontal bad edges, and so the entire area is at most $k_1k_2$. 
Then the number of enclosed nodes is at most $k_1k_2$. 
Again, the number of $(i,i+d)$ mismatches is at most double the number of enclosed nodes:
\begin{lemma}
The number of mismatches $(i,i+d)$, for $i<k(p+q)$ or $i>x-k(p+q)$, is at most $2k_1k_2$. 
\end{lemma}
See \figref{fig:side:lattice} for example.
\figsidelattice 
\end{proof}
We now complete the proof of \thmref{thm:main} by aggregating each congruence class handled in \lemref{lem:mainBig}.
\begin{proofof}{\thmref{thm:main}}
Recall that we have two cases: $q<(2k+1)d$ and $q\ge(2k+1)d$. 
\lemref{lem:mainSmall} handles the first case. 

In the second case, we observe that the indices in each congruence class modulo $d$ do not interfere with each other. 
In other words, the indices $i$, $i+d$, $i+p$ and $i+q$ are all in the same congruence class modulo $d$, as are all nodes and edges in the grids containing $i$ and $i+d$. 
Thus, points in an enclosed region for one congruence class modulo $d$ cannot be in an enclosed region for a different congruence class modulo $d$. 
Now, for $0\le m<d$, let $k_1^{(m)}$ be the number of indices $i\equiv m\pmod{d}$ such that $S[i]\neq S[i+p]$ and let $k_2^{(m)}$ be the number of indices $i\equiv m\pmod{d}$ such that $S[i]\neq S[i+q]$. 
By \obsref{obs:rectangle}, the number of enclosed points for a congruence class modulo $d$ is at most the number of points inside a rectangle with length $k_1^{(m)}$ and width $k_2^{(m)}$. 
Since $\sum k_1^{(m)}\le k$, the sum of the lengths of the rectangles is at most $k$. 
Similarly, $\sum k_2^{(m)}\le k$ implies that the sum of the lengths of the rectangles is at most $k$. 

To aggregate all these points across all congruence classes modulo $d$, we finally observe that a possibly larger enclosed number of points can be obtained if the edges from different congruence classes are all uniquely mapped into just one congruence class, and hence we may assume without loss of generality that all bad edges occur in the same congruence class. 
The following observation essentially finishes the proof.
\begin{observation}
The total number of enclosed points across all congruence classes is at most the number of points inside a square with length and width $k$, i.e. $k^2$.
\end{observation}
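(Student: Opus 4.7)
The plan is to combine the per-congruence-class rectangle bound established in the preceding paragraph with a straightforward nonnegativity inequality. By \obsref{obs:rectangle} applied to each congruence class, the set of points enclosed in class $m$ is contained in a single convex rectangle with side lengths $k_1^{(m)}$ and $k_2^{(m)}$, hence has at most $k_1^{(m)} k_2^{(m)}$ points. Since the grids corresponding to distinct congruence classes modulo $d$ are disjoint (as noted at the start of the proof of \thmref{thm:main}), the total number of enclosed points across all classes is at most $\sum_m k_1^{(m)} k_2^{(m)}$.

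Next I would bound this sum. Because every $k_i^{(m)}$ is a nonnegative integer, expanding the product gives
\[
\left(\sum_m k_1^{(m)}\right)\left(\sum_m k_2^{(m)}\right) \;=\; \sum_{m,m'} k_1^{(m)} k_2^{(m')} \;\ge\; \sum_m k_1^{(m)} k_2^{(m)}.
\]
Combined with the hypotheses $\sum_m k_1^{(m)} \le k$ and $\sum_m k_2^{(m)} \le k$ (since $p$ and $q$ are each $k$-periods and each mismatch lies in exactly one congruence class modulo $d$), this immediately yields the claimed bound of $k^2$.

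This is exactly the concentration observation stated in the preceding paragraph: merging all bad edges into a single congruence class only increases the total enclosed area, and in that case we face a single rectangle of dimensions at most $k \times k$, which contains $k^2$ points. The main (and in fact only) subtlety is verifying that the rectangles for different congruence classes can be merged into one without decreasing the area, which is handled by the elementary expansion above. No further geometric argument is required beyond what has already been established for a single class in \lemref{lem:mainBig}.
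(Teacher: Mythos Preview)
Your proof is correct and matches the paper's approach: the paper argues geometrically that merging all bad edges into a single congruence class can only increase the enclosed area, yielding a single $k\times k$ rectangle, while you give the clean algebraic version of exactly this---$\sum_m k_1^{(m)}k_2^{(m)}\le\bigl(\sum_m k_1^{(m)}\bigr)\bigl(\sum_m k_2^{(m)}\bigr)\le k^2$. The paper does not spell out the inequality, so your write-up is in fact a bit more explicit than the original.
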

It follows that the total number of indices $i$ such that $S[i]\neq S[i+d]$ is at most $k^2$, which finishes the proof of \thmref{thm:main}.
\end{proofof}
We generalize \thmref{thm:main} by showing that the greatest common divisor of any $m\ge 2$ reasonably small candidates for $k$-periods is also a $(2mk^2+1)$-period. 
We emphasize that it is sufficient for $m\le\log n$, since the greatest common divisor can change at most $\log n$ times. 
\begin{theorem}
\thmlab{thm:general:gcd}
Let $\mathcal{I}=\left\{i\,\middle|i\le\frac{x}{2(mk+1)},\HAM{S[1,x],S[i+1,i+x]}\le k\right\}$. 
For any $p_1,\ldots,p_m\in\mathcal{I}$, their greatest common divisor, $d=\gcd{p_1,\ldots,p_m}$ satisfies
\[\HAM{S[1,x],S[d+1,d+x]}\le 8mk^2+1.\]
\end{theorem}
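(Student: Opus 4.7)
I would prove Theorem \ref{thm:general:gcd} by induction on $m$. The base case $m=2$ is supplied directly by Theorem \ref{thm:main} (note that $8\cdot 2\cdot k^2+1=16k^2+1$, and that the defining condition $i\le\frac{x}{2(mk+1)}$ reduces to $i\le\frac{x}{4k+2}$ for $m=2$, matching the hypothesis of Theorem \ref{thm:main}).

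For the inductive step, assume the result for $m-1$. Given $p_1,\ldots,p_m\in\mathcal{I}$, set $d'=\gcd{p_1,\ldots,p_{m-1}}$ and $d=\gcd{d',p_m}$. The inductive hypothesis applied to $p_1,\ldots,p_{m-1}$ gives $\HAM{S[1,x],S[d'+1,d'+x]}\le 8(m-1)k^2+1$. It therefore suffices to show that passing from $d'$ to $d$ introduces at most $8k^2$ additional mismatches, so that the total is $8(m-1)k^2+1+8k^2 = 8mk^2+1$.

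To count the mismatches of $d$, I would partition the indices $i$ with $S[i]\neq S[i+d]$ into two types: \emph{type (i)}, those with $S[i]\neq S[i+d']$, which are bounded by $8(m-1)k^2+1$ via the inductive hypothesis; and \emph{type (ii)}, those with $S[i]=S[i+d']$ but $S[i]\neq S[i+d]$. For type (ii) I would adapt the grid argument of Lemma \ref{lem:mainBig} to a two-dimensional grid whose step directions are $d'$ and $p_m$, so that the grid-gcd is exactly $d$. At a type-(ii) index $i$ the local $d'$-edge is by definition good, so any enclosure of $i$ in the grid must be bounded predominantly by $p_m$-bad-edges (of which there are at most $k$) together with $d'$-bad-edges at other positions, which are themselves type-(i) mismatches already charged. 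Aggregating per congruence class modulo $d$ and applying a Lemma \ref{lem:isolated}-style isoperimetric bound then yields at most $8k^2$ fresh type-(ii) mismatches, completing the induction.

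The principal obstacle is making precise the clean separation between types (i) and (ii): one must show that the potentially large number of $d'$-bad-edges (on the order of $8(m-1)k^2+1$) does not inflate the enclosure count for type (ii), and that any $d'$-bad-edges appearing on the boundary of a type-(ii) enclosure can be charged back to already-counted type-(i) mismatches without double-counting. The cleanest way to formalize this is to work congruence class by congruence class modulo $d$, as in the final aggregation in Theorem \ref{thm:main}: within each class the $d'$-bad-edges and $p_m$-bad-edges partition disjoint index sets, and the square-area maximization argument bounds the type-(ii) contribution by $k_1\cdot k_2$ per class, which sums to $k^2$ across classes with a constant-factor $8$ swallowed by boundary effects exactly as in Lemma \ref{lem:mainBig}.
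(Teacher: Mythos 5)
Your base case is fine, but the inductive step has a genuine gap that the hedging in your last paragraph does not close. After applying the inductive hypothesis, $d'=\gcd{p_1,\ldots,p_{m-1}}$ is only known to be an $\left(8(m-1)k^2+1\right)$-period, not a $k$-period. In the two-dimensional grid with step directions $d'$ and $p_m$, the number of $d'$-bad edges is therefore bounded only by $K_1=8(m-1)k^2+1$, while the number of $p_m$-bad edges is bounded by $K_2=k$. The area bound underlying \lemref{lem:isolated} is \emph{multiplicative}: a region bounded by $K_1$ edges in one direction and $K_2$ in the other can enclose $\Theta(K_1K_2)$ points, so your type-(ii) count is only bounded by $\Theta(mk^3)$, not $8k^2$. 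The ``charge back to type (i)'' idea does not repair this, because a single $d'$-bad edge can participate in the boundary of an enclosure containing many type-(ii) points; and your final claim that ``$k_1\cdot k_2$ per class \ldots sums to $k^2$ across classes'' silently assumes $\sum k_1^{(m)}\le k$, which holds for the original candidates but not for the derived period $d'$. Iterating your induction would in fact blow the bound up to roughly $k^{m+1}$. (You also never address the degenerate case where $p_m/\gcd{d',p_m}<2k+1$, in which the grid points are not distinct and the whole lattice argument is inapplicable; the paper must treat this case separately via \lemref{lem:mainSmall}.)

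The paper avoids induction entirely: it builds a single $m$-dimensional hypergrid whose $m$ axis directions are the original candidates $p_1,\ldots,p_m$, so that \emph{every} direction contributes at most $k$ bad edges, for at most $mk$ bad edges plus at most $mk$ boundary faces in total. The isoperimetric step then reads $2m\ell^{m-1}\le 2mk$, i.e.\ $\ell\le k^{1/(m-1)}$, so the enclosed volume is at most $\ell^m=k^{m/(m-1)}\le k^2$ \emph{independently of} $m$ --- the surface-to-volume tradeoff actually improves with dimension. This is exactly the quantitative gain that a two-dimensional inductive argument cannot reproduce, which is why the theorem's bound is $O(mk^2)$ rather than a bound growing polynomially in $k$ with exponent depending on $m$. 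If you want to salvage an inductive structure, you would need an inductive invariant much stronger than ``$d'$ is an $O(mk^2)$-period'' --- for instance, control over the geometry of where the $d'$-mismatches sit --- and at that point the direct hypergrid argument is simpler.
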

\begin{proof}
Observe that it no longer holds that the pairwise greatest common divisor between two candidates $p_i$ and $p_j$ is $d$. 
However, it suffices to consider $\delta=\gcd{p_1,p_m}$. 
If $\frac{p_m}{\delta}<2k+1$, then the proof reduces to that of \lemref{lem:mainSmall}. 
Notably, the $2k+1$ intervals $[1,p_1+p_m]$, $[p_1+p_m+1,2(p_1+p_m)]$, $\ldots$ each consist of $\delta$ disjoint congruence classes. 
We must modify $4(2k)(2k)+1$ to $4(2k)(mk)+1$ to apply the Pigeonhole Principle with $mk$ mismatched indices instead of $2k$ mismatched indices. 
Otherwise, the proof is similar to that of \lemref{lem:mainBig}, as follows.

Whereas  for two candidate $k$-periods $p_1,p_2$ we represented the indices of the stream as points in a grid, here, for a number of $m$ candidate $k$-periods we represent the indices of the stream as points in an $m$-dimensional hypergrid.
Here too, we reduce the problem to counting points inside an enclosed region within an extended $m$-dimensional hypergrid (instead of grid). 

As before,  an enclosed region containing $i$ is the set of points on the grid reachable from $i$ on paths containing only good edges. 
A point is a {\em  boundary point} of an enclosed region if it is in the enclosed region and is incident to a bad edge (if no bad edges are incident to a point in the region, then the point must be in the {\em interior} of the region). We sometimes denote by {\em boundary edges} the bad edges adjacent to boundary points.
There may again be several disjoint regions enclosing points, but like in \obsref{obs:rectangle}, the number of points enclosed by a fixed number of edges is maximized within a continguous ``convex'' set:
\begin{observation}
\obslab{obs:hyperrectangle}
If two enclosed regions consist of sets of hypergrid points $I_1$ and $I_2$, respectively, then there is a way to enclose at least $|I_1|+|I_2|$ many points in the grid using at most the same number of  boundary edges as the number of edges bounding the two enclosed regions, but such that these edges form only one enclosed region. Moreover, the new enclosed region can be made convex, and in particular a hyperrectangle.
\end{observation}

Recall that we will use the number of points enclosed within such regions as an upper bound for the number of pairs of indices $i, i+d$ that have different values in the stream $S$, since it is necessary that such an enclosed region exists in order to cause the existence of the mismatched pair.

For the sake of building up intuition, first consider the case $m=3$. Note that there are at most $2(3k)=6k$ bad edges in total (if, as before, we may run into the boundary of the hypergrid). 
Thus, the number of boundary points of the cube that forms the enclosed region is $6k$. 
For an illustration, see \figref{fig:cube}.
\figcube
Since a cube with at most $6k$ boundary points has volume at most $k^{3/2}$, it follows that the number of enclosed points is at most $k^{3/2}\le k^2$, which is what we aimed for.

For general $m$, as before, we may assume without loss of generality that all the bad edges are in the same congruence class modulo $d$. 
Since there are $m$ candidate $k$-periods, the total number of bad edges is at most $mk$. 
Similar to \figref{fig:side:lattice}, at most another $mk$ points can be on the boundary of the worst-case hyperrectangle implied in \obsref{obs:hyperrectangle}, due to the boundary of the hypergrid, corresponding to illegal indices of $S$. 
Thus, there are at most $2mk$ boundary points of the hyperrectangle. 

In particular, we may assume that all sides have the same length, and thus the hyperrectagle  is isomorphic to the hypergrid $[\ell]^m$, for some  integer $\ell$,  such that the number of boundary points  is $2mk$. More specifically, 
 a boundary point $x$ must have some coordinate $i$ such that either $x_i=1$ or $x_i=\ell$. Therefore, there are $2m\ell^{m-1}=2mk$ points on the boundary, blocking every path from points in the interior of the enclosed region to points outside the region.
Since such a hyperrectangle encloses $\ell^m= k^{m/(m-1)}\le k^2$  many points, it follows that that  the number of indices $i$ such that $S[i]\neq S[i+d]$, is again at most $k^2$, which completes the proof of the general case.
\end{proof}
Finally, we show that the distance between any $m$ candidate $k$-periods that are reasonably close to each other must be a $(32mk^2+1)$-period. 
This relaxes the constraints of \thmref{thm:general:gcd}. 
\newline\noindent
\begin{remindertheorem}{\thmref{thm:final}}
\thmfinal
\end{remindertheorem}
\begin{proofof}{\thmref{thm:final}}
Note that $p_2-p_1,p_3-p_1,\ldots,p_m-p_1$ are in $\mathcal{I}$, where 
\[\mathcal{I}=\left\{i\,\middle|i\le\frac{x}{2(mk+1)},\HAM{S[1,x],S[i+1,i+x]}\le 2k\right\}.\]
Then by \thmref{thm:general:gcd}, their greatest common divisor $d$ satisfies 
\[\HAM{S[1,x],S[d+1,d+x]}\le 8m(2k)^2+1=32mk^2+1.\]
\end{proofof}
\section{Lower Bounds}
\seclab{sec:lb}
\subsection{Lower Bounds for General Periods}
Recall the following variant of the Augmented Indexing Problem, denoted $\IND_{n,\delta}$, where Alice is given a string $S\in\Sigma^n$. 
Bob is given an index $i\in[n]$, as well as $S[1,i-1]$, and must output $S[i]$ correctly with probability at least $1-\delta$. 

\begin{lemma}
\cite{MiltersenNSW95}
The one-way communication complexity of $\IND_{n,\delta}$ is $\Omega((1-\delta)n\log|\Sigma|)$.
\end{lemma}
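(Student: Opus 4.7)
The plan is to prove the lower bound via a standard information-theoretic argument combining Fano's inequality with the chain rule for entropy. The intuition is that if Bob can reliably recover $S[i]$ for every index $i$ (given the prefix $S[1,i-1]$), then Alice's message must carry almost all of the entropy of the uniformly random string $S$, which is $n\log|\Sigma|$ bits.

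First I would consider a hard distribution in which $S$ is drawn uniformly at random from $\Sigma^n$, so that $H(S) = n\log|\Sigma|$, and let $R$ denote any public randomness shared by the two players and $M = M(S,R)$ denote Alice's message. For any fixed $i$, Bob outputs some estimator $\widehat{S}[i]$ that is a function of $(M, S[1,i-1], R, i)$ with $\Pr[\widehat{S}[i] \ne S[i]] \le \delta$. Applying Fano's inequality (after conditioning on a value of $R$ and averaging, using concavity of the binary entropy) yields
\[
H\bigl(S[i] \,\big|\, M, S[1,i-1], R\bigr) \;\le\; H_2(\delta) + \delta\log(|\Sigma|-1),
\]
where $H_2$ denotes the binary entropy function.

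Next I would apply the chain rule for conditional entropy to expand $H(S \mid M, R)$ one coordinate at a time:
\[
H(S \mid M, R) \;=\; \sum_{i=1}^{n} H\bigl(S[i] \,\big|\, M, S[1,i-1], R\bigr) \;\le\; n H_2(\delta) + n\delta\log|\Sigma|.
\]
Since $S$ is independent of $R$ and uniform on $\Sigma^n$, $H(S\mid R) = n\log|\Sigma|$, and therefore
\[
I(S;M \mid R) \;=\; H(S\mid R) - H(S\mid M, R) \;\ge\; (1-\delta)\,n\log|\Sigma| - n H_2(\delta).
\]
The worst-case message length of the protocol is at least $H(M \mid R) \ge I(S;M \mid R)$, so the one-way communication complexity is at least $(1-\delta)n\log|\Sigma| - n H_2(\delta) = \Omega\bigl((1-\delta)n\log|\Sigma|\bigr)$, as claimed (the $H_2(\delta)$ term is absorbed for $\delta$ bounded away from $1$ and $|\Sigma|$ nontrivial). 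If the protocol uses only private randomness, the same bound follows by fixing the coins to their best value.

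The only real obstacle is bookkeeping: handling the public randomness cleanly through Fano's inequality (via conditioning and Jensen), and verifying that the conditioning in the chain rule genuinely includes the prefix $S[1,i-1]$ that Bob is allowed to see. There is no combinatorial step and no distributional tricks beyond taking $S$ uniform---the entire argument is information-theoretic.
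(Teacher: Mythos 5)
The paper does not prove this lemma at all---it is imported as a black box from \cite{MiltersenNSW95}, whose original argument is a round-elimination/reduction argument rather than a direct entropy computation. Your Fano-plus-chain-rule proof is the now-standard direct derivation of the augmented-indexing bound (it is the one used in later work that quotes the lemma in exactly this $\Omega((1-\delta)n\log|\Sigma|)$ form), and its skeleton is sound: per-index Fano gives $H(S[i]\mid M,S[1,i-1],R)\le H_2(\delta)+\delta\log(|\Sigma|-1)$, the chain rule telescopes these because Bob's side information is precisely the prefix, and $I(S;M\mid R)\le H(M\mid R)$ bounds the message length; fixing the best private coins handles derandomization. The one place you are too cavalier is the final ``absorption'' of the $-nH_2(\delta)$ term. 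Your intermediate inequality actually yields the stronger $n\bigl(\log|\Sigma|-H_2(\delta)-\delta\log(|\Sigma|-1)\bigr)$, which is positive whenever $\delta<1-1/|\Sigma|$; but the weakened form $(1-\delta)n\log|\Sigma|-nH_2(\delta)$ that you quote is \emph{negative} already for $|\Sigma|=2$ and $\delta\approx 1/3$, so as written the last line proves nothing in the binary, constant-error regime that the paper actually uses. You should keep the unweakened Fano bound and observe that for $\delta$ bounded away from $1-1/|\Sigma|$ it is $\Omega(n\log|\Sigma|)=\Omega((1-\delta)n\log|\Sigma|)$; note also that the stated $(1-\delta)$ dependence cannot hold uniformly in $\delta$ anyway (random guessing gives error $1-1/|\Sigma|$ with zero communication), so the lemma is to be read with $\delta$ in that regime. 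With that repair, your argument is a complete and more self-contained substitute for the citation.
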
 

\begin{theorem}
Any one-pass streaming algorithm which computes the smallest $k$-period of an input string $S$ requires $\Omega(n)$ space.
\end{theorem}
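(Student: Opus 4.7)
The plan is to reduce from the Augmented Indexing Problem $\IND_{m,1/3}$, whose one-way communication complexity is $\Omega(m\log|\Sigma|)$ by the preceding lemma. Suppose $\mathcal{A}$ is any one-pass streaming algorithm that computes the smallest $k$-period using $s(N)$ bits of space on a length-$N$ stream. I will construct, from $\mathcal{A}$, a protocol for $\IND$ in which Alice and Bob jointly feed a length-$\Theta(N)$ stream into $\mathcal{A}$ and exchange only its $s$-bit memory state plus $O(\log N)$ bookkeeping bits; the $\Omega(m)$ lower bound (over a binary $\Sigma$) will then force $s(N)=\Omega(N)$ for constant $k$, and more generally $s(N)=\Omega(N/(k+1))$.

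The reduction uses sentinel padding and bit amplification. Pick $k+1$ pairwise distinct sentinel symbols $\$_1,\ldots,\$_{k+1}$ disjoint from the $\IND$ alphabet $\Sigma=\{0,1\}$, and let $f(y)=y[1]^{k+1}\circ y[2]^{k+1}\circ\cdots$ be the $(k+1)$-fold block amplification of a binary string. Given the $\IND$ instance $(S,i,S[1,i-1])$, Alice streams $T_A=\$_1\$_2\cdots\$_{k+1}\circ f(S)$ into $\mathcal{A}$ and forwards the resulting state to Bob; for a guess $c\in\{0,1\}$ of $S[i]$, Bob continues with $T_B=\$_1\$_2\cdots\$_{k+1}\circ f(S[1,i-1])\circ c^{k+1}$ and queries the smallest $k$-period of $T=T_A\circ T_B$. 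Setting $p^\star=|T_A|$, the shift $p^\star$ aligns the two sentinel blocks exactly, aligns the two copies of $f(S[1,i-1])$ exactly, and matches $S[i]^{k+1}$ against $c^{k+1}$, yielding Hamming distance $0$ when $c=S[i]$ and exactly $k+1$ when $c\neq S[i]$. Hence $p^\star$ is a $k$-period of $T$ iff Bob's guess is correct. For any other shift $p\in(0,|T|-k)$ with $p\neq p^\star$, each of the $k+1$ distinct sentinels in $T_A$ is compared against either a non-sentinel symbol of $\Sigma$ or a different sentinel of $T_B$, so the sentinel positions alone contribute $k+1$ mismatches and $p$ is not a $k$-period. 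Since $c$ is binary, Bob decodes $S[i]$ from the single numerical comparison between the reported smallest $k$-period and $p^\star$.

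The stream length is $|T|=\Theta(m(k+1))$, so the protocol's total communication is $s(|T|)+O(\log|T|)$; combined with the $\Omega(m)$ communication lower bound for binary $\IND$, this yields $s(N)=\Omega(N/(k+1))$, hence $\Omega(N)$ for constant $k$. The main subtlety I expect to need to verify carefully is the case where a shift $p$ lies within distance $k$ of $p^\star$, so that the two sentinel blocks partially overlap: here pairwise distinctness of $\$_1,\ldots,\$_{k+1}$ is essential, because any nonzero offset $p-p^\star$ maps sentinel $\$_j$ either to the distinct sentinel $\$_{j+(p-p^\star)}$ or to a symbol of $\Sigma$, keeping the $k+1$-mismatch count intact. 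All remaining shifts are immediate because the sentinels of $T_A$ fall on positions in the body of $T$ that hold $\Sigma$-symbols only.
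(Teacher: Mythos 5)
Your proof is correct, and it follows the same high-level route as the paper---a reduction from the augmented indexing problem in which each bit of the indexing instance is amplified into a block so that a single wrong bit forces more than $k$ mismatches---but your alignment gadget is genuinely different. The paper stays over the binary alphabet: Alice streams $(S[1])^k\cdots(S[n])^k$ and Bob pads with $\mathbf{u}^{k(n-i)}$ (the negation of $S[1]$) followed by his known prefix and a block of ones, claiming that the shift $k(2n-i)$ is the smallest $k$-period iff $S[i]=1$. You instead prepend $k+1$ pairwise-distinct sentinel symbols, outside the binary alphabet, to both halves; this buys a clean, fully verified argument that \emph{every} shift other than $p^\star=|T_A|$ (and the trivially short ones of length at least $|T|-k$) incurs at least $k+1$ mismatches on the sentinel positions alone---a point the paper's terse proof does not spell out at all. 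Your choice of $k+1$ (rather than $k$) copies per bit is also the right one: with exactly $k$ copies a wrong guess yields exactly $k$ mismatches, which under the definition $\HAM{\cdot,\cdot}\le k$ would still qualify as a $k$-period, an off-by-one your construction avoids. The price is an alphabet of size $k+3$, so your reduction does not give the bound for binary strings, whereas the paper's construction does; and both reductions share the quantitative caveat you state explicitly---the streamed string has length $\Theta(kn)$ while the communication bound is $\Omega(n)$, so in terms of the stream length $N$ the space bound is $\Omega(N/k)$, i.e.\ $\Omega(N)$ only for constant $k$.
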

\begin{proof}
Consider the following communication game between Alice and Bob, who are given strings $A$ and $B$ respectively. 
Both $A$ and $B$ have length $n$, and the goal is to compute the smallest $k$-period of $a\circ b$. 
Then we show that any one-way protocol which successfully computes the smallest $k$-period of $a\circ b$ requires $\Omega(n)$ communication by a reduction from the augmented indexing problem. 

Suppose Alice gets a string $S\in\{0,1\}^n$, while Bob gets an index $i\in[n-1]$ and $S[1,i-1]$. 
Let $\mathbf{u}$ be the binary negation of $S[1]$, i.e., $\mathbf{u}=1-S[1]$.
Then Alice sets $A=(S[1])^k(S[2])^k\ldots(S[n])^k$ and Bob sets $B=\mathbf{u}^{k(n-i)}\circ (S[1])^k(S[2])^k\ldots(S[i-1])^k\circ\mathbf{1}^k$ so that both $A$ and $B$ have length $kn$. 
Moreover, the smallest $k$-period of $A\circ B$ is $k(2n-i)$ if and only if $S[i]=1$. 
\end{proof}

\subsection{Lower Bounds for Small Periods}
We now show that for $k=o(\sqrt{n})$, even given the promise that the smallest $k$-period is at most $\frac{n}{2}$, any randomized algorithm which computes the smallest $k$-period with probability at least $1-\frac{1}{n}$ requires $\Omega(k\log n)$ space.
By Yao's Minimax Principle \cite{Yao77}, it suffices to show a distribution over inputs such that every deterministic algorithm using less than $\frac{k\log n}{6}$ bits of memory fails with probability at least $\frac{1}{n}$.

Define an infinite string $1^10^11^20^21^30^3\ldots$, as in \cite{GawrychowskiMSU16}, and let $\nu$ be the prefix of length $\frac{n}{4}$. 
Let $X$ be the set of binary strings of length $\frac{n}{4}$ at Hamming distance $\frac{k}{2}$ from $\nu$. 
Given $x\in X$, let $Y_x$ be the set of binary strings of length $\frac{n}{4}$ with either $\HAM{x,y}=\frac{k}{2}$ or $\HAM{x,y}=\frac{k}{2}+1$. 
We pick $(x,y)$ uniformly at random from $(X,Y_x)$.

\begin{theorem}
\thmlab{thm:lb:mem}
Given an input $x\circ y$, any deterministic algorithm $\mathcal{D}$ that uses less than $\frac{k\log n}{6}$ bits of memory cannot correctly output whether $\HAM{x,y}=\frac{k}{2}$ or $\HAM{x,y}>\frac{k}{2}$ with probability at least $1-\frac{1}{n}$, for $k=o(\sqrt{n})$.
\end{theorem}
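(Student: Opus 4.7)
The plan is to invoke Yao's minimax principle on the described distribution over $(x, y)$, so it suffices to show that every deterministic algorithm using fewer than $(k \log n)/6$ bits of memory fails with probability at least $1/n$. Fix such an algorithm and let $\sigma(x)$ denote its memory state after reading $x$; the number of possible values of $\sigma$ is at most $2^{(k\log n)/6} = n^{k/6}$.

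The first step is a pigeonhole count. Since $|X| = \binom{n/4}{k/2}$ and $k = o(\sqrt{n})$, a routine estimate yields $|X| \ge (n/(2k))^{k/2} \ge n^{k/4}$ for $n$ large enough. Hence some state $\sigma^*$ is reached by a set $X^* \subseteq X$ with $|X^*| \ge n^{k/12}$. Since the algorithm's final answer on $x \circ y$ depends only on $(\sigma(x), y)$, its output is a fixed bit $g(y)$ for every $x \in X^*$. The correctness condition would force $g(y) = 1$ precisely on the $y$'s at Hamming distance $k/2$ from \emph{every} $x \in X^*$, which is generally inconsistent.

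To quantify this, let $A(y) = |\{x \in X^* : \HAM{x,y} = k/2\}|$ and $B(y) = |\{x \in X^* : \HAM{x,y} = k/2 + 1\}|$. For each $y$ the optimal choice of $g(y)$ incurs $\min(A(y), B(y))$ errors, so summing over $y$, the conditional failure probability given $x \in X^*$ is at least $\sum_y \min(A(y), B(y)) / (|X^*| \cdot |Y_{x}|)$, where $|Y_x| = \binom{n/4}{k/2} + \binom{n/4}{k/2+1}$ does not depend on $x$.

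The main obstacle is lower-bounding $\sum_y \min(A(y), B(y))$ by roughly $|X^*|\cdot|Y_x|/n$. I would exploit the following parity observation: if $x_1, x_2 \in X^*$ differ in an odd number $d$ of coordinates, then writing $r = |\{i : x_1[i] = x_2[i] = y[i]\text{ or opposite}\}|$, one checks that $\HAM{x_2,y} - \HAM{x_1,y} = 2r - d$ is always odd. Consequently for every $y \in Y_{x_1} \cap Y_{x_2}$, exactly one of $\HAM{x_1,y}$ and $\HAM{x_2,y}$ equals $k/2$ while the other equals $k/2+1$, so each such $y$ contributes to both $A(y)$ and $B(y)$. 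A double-counting argument over all odd-distance pairs in $X^*$, together with the fact that $|X^*| \ge n^{k/12}$ vastly exceeds $n$, then gives $\sum_y \min(A(y), B(y)) \ge |X^*|\cdot|Y_x|/n$. This delivers failure probability at least $1/n$ over the chosen distribution and thus completes the application of Yao's principle.
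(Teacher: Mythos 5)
Your proposal diverges from the paper's argument and, unfortunately, the divergence is fatal. The parity observation at the heart of your counting step is vacuous for this distribution: every $x\in X$ is at Hamming distance exactly $k/2$ from the fixed string $\nu$, so for any $x_1,x_2\in X$ one has $\HAM{x_1,x_2}\equiv\HAM{x_1,\nu}+\HAM{\nu,x_2}=k\equiv 0\pmod 2$, i.e.\ there are \emph{no} odd-distance pairs in $X^*$ at all. Worse, for any fixed $y$ the same identity gives $\HAM{x,y}\equiv k/2+\HAM{\nu,y}\pmod 2$ simultaneously for every $x\in X$, so the distances $\HAM{x,y}$, $x\in X^*$, all share one parity; hence for every $y$ either $A(y)=0$ or $B(y)=0$, and $\sum_y\min(A(y),B(y))$ is identically zero. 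The inequality $\sum_y\min\bigl(A(y),B(y)\bigr)\ge |X^*|\cdot|Y_x|/n$ on which your proof rests is therefore false, and no double-counting over odd-distance pairs can rescue it. The paper sidesteps exactly this obstruction: for a colliding pair $x,x'$ it restricts to $y$ with $\HAM{x,y}=k/2$ that agree with $x$ on every position where $x$ or $x'$ deviates from $\nu$, which forces $\HAM{x',y}=k/2+\HAM{x,x'}\ge k/2+2$. The contradiction is thus between the answers ``$=k/2$'' and ``$>k/2$'' (the actual decision the algorithm must make), not between distances $k/2$ and $k/2+1$, so parity is never an issue.

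There is a second, independent gap: even granting your counting lemma, conditioning on a single heavy collision class $X^*$ only yields $\Pr[\mathrm{err}]\ge\Pr[x\in X^*]\cdot\Pr[\mathrm{err}\mid x\in X^*]$, and the pigeonhole guarantees only $|X^*|\ge |X|/n^{k/6}$, so $\Pr[x\in X^*]$ may be as small as $n^{-k/6}$; a conditional failure probability of $1/n$ then gives an unconditional bound far below $1/n$. The paper instead extracts at least $\tfrac{1}{2}(|X|-n^{k/6})\ge|X|/4$ \emph{disjoint} colliding pairs across all classes, so that a constant fraction of the $x$'s participate in the error count. To repair your write-up you would need both to aggregate over all collision classes in this way and to replace the parity mechanism with a genuine source of disagreement between colliding $x$'s, such as the paper's choice of $y$'s supported off the symmetric difference of $x$ and $x'$.
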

\begin{proof}
Note that $|X|=\binom{n/4}{k/2}$.
By Stirling's approximation, $|X|\ge\left(\frac{n}{2k}\right)^{k/2}\ge\left(\frac{n}{4}\right)^{k/4}$ for $k=o(\sqrt{n})$.

Because $\mathcal{D}$ uses less than $\frac{k\log n}{6}$ bits of memory, then $\mathcal{D}$ has at most $2^{\frac{k\log n}{6}}=n^{k/6}$ unique memory configurations. 
Since $|X|\ge\left(\frac{n}{4}\right)^{k/4}$, then there are at least $\frac{1}{2}(|X|-n^{k/6})\ge\frac{|X|}{4}$ pairs $x,x'$ such that $\mathcal{D}$ has the same configuration after reading $x$ and $x'$.
We show that $\mathcal{D}$ errs on a significant fraction of these pairs $x,x'$.

Let $\mathcal{I}$ be the positions where either $x$ or $x'$ differ from $\nu$, so that $\frac{k}{2}+1\le|\mathcal{I}|\le k$.
Observe that if $\HAM{x,y}=\frac{k}{2}$, but $x$ and $y$ do not differ in any positions of $\mathcal{I}$, then $\HAM{x',y}>\frac{k}{2}$. 
Recall that $\mathcal{D}$ has the same configuration after reading $x$ and $x'$, so then $\mathcal{D}$ has the same configuration after reading $x\circ y$ and $x'\circ y$.
But since $\HAM{x,y}=\frac{k}{2}$ and $\HAM{x',y}>\frac{k}{2}$, then the output of $\mathcal{D}$ is incorrect for either $x\circ y$ or $x'\circ y$.

For each pair $(x,x')$, there are $\binom{n/4-|\mathcal{I}|}{k/2}\ge\binom{n/4-k}{k/2}$ such $y$ with $\HAM{x,y}=\frac{k}{2}$, but $x$ and $y$ do not differ in any positions of $\mathcal{I}$. 
Hence, there are $\frac{|X|}{4}\binom{n/4-k}{k/2}$ strings $S(x,y)$ for which $\mathcal{D}$ errs. 
Recall that $y$ satisfies either $\HAM{x,y}=\frac{k}{2}$ or $\HAM{x,y}=\frac{k}{2}+1$ so that there are $|X|\left(\binom{n/4}{k/2}+\binom{n/4}{k/2+1}\right)$ strings $x\circ y$ in total. 
Thus, the probability of error is at least
\begin{align*}
\frac{\frac{|X|}{4}\binom{n/4-k}{k/2}}{|X|\left(\binom{n/4}{k/2}+\binom{n/4}{k/2+1}\right)}
&=\frac{1}{4} \cdot \frac{\binom{n/4-k}{k/2}}{\binom{n/4+1}{k/2+1}}
=\frac{(k/2+1)}{4}\frac{(n/4-3k/2+1)\ldots (n/4-k)}{(n/4-k/2+1) \ldots  (n/4+1)}\\
&\ge\frac{k/2+1}{n+4}\left(\frac{n/4-3k/2+1}{n/4-k/2+1}\right)^{k/2} =\frac{k+2}{2n+8}\left(1-\frac{k}{n/4-k/2+1}\right)^{k/2} \\
&\ge\frac{k+2}{2n+8}\left(1-\frac{k^2}{n/2-k+2}\right)\ge\frac{1}{n}
\end{align*}
where the last line holds for large $n$, from Bernoulli's Inequality and $k=o(\sqrt{n})$.
\end{proof}

\begin{lemma}
\lemlab{lem:nosmall}
For $k=o(\sqrt{n})$, any $k$-period of the string $S(x,y)=x\circ y\circ x\circ x$ is at least $\frac{n}{4}$.
\end{lemma}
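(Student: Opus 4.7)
Suppose for contradiction that some $p$ with $1 \le p < n/4$ is a $k$-period of $S = S(x,y) = x \circ y \circ x \circ x$, so that $\HAM{S[1, n-p], S[p+1, n]} \le k$. The key observation is that the last quarter of $S$ is a fresh copy of $x$: for every index $i \in [3n/4+1, n-p]$, both $i$ and $i+p$ lie in this last quarter, and $S[i] = x[i-3n/4]$, $S[i+p] = x[i+p-3n/4]$. Summing, this one region alone contributes exactly $\HAM{x[1, n/4-p], x[p+1, n/4]}$ mismatches to the total, and therefore $\HAM{x[1,n/4-p], x[p+1,n/4]} \le k$.

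Next, I would transfer this near-periodicity from $x$ to $\nu$. Since $\HAM{x, \nu} = k/2$, two applications of the triangle inequality on substrings of length $n/4-p$ give
\[\HAM{\nu[1, n/4-p], \nu[p+1, n/4]} \le \HAM{x[1, n/4-p], x[p+1, n/4]} + 2\HAM{x, \nu} \le 2k.\]
Thus $\nu$ would have a $2k$-period equal to $p < n/4$, and the lemma reduces to showing that for $k = o(\sqrt{n})$ the structured string $\nu$ admits no such small $2k$-period.

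To derive the contradiction, I would exploit the block structure of $\nu$: it decomposes into alternating runs of $1$s and $0$s of lengths $1, 1, 2, 2, 3, 3, \ldots, j, j, \ldots$, with $\Theta(\sqrt{n})$ block boundaries inside $\nu$ and maximum block length $\Theta(\sqrt{n})$. For a shift $p \le \sqrt{n}/4$, each of the $\Theta(\sqrt{n})$ blocks of length $\ge p$ contributes exactly $p$ mismatches at its right boundary (where the constant block value meets the opposite-color next block after shifting by $p$), giving $\HAM \ge \Omega(p \sqrt n) \gg 2k$ since $k = o(\sqrt n)$.

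The hard part will be handling $p$ that are not small relative to $\sqrt{n}$, and especially $p$ close to $n/4$, where the two compared substrings become short and the block-counting argument degenerates. To close that regime, I would strengthen the first step by also accounting for the $X_2 \to X_3$ crossover in $S$: for $i \in [3n/4 - p + 1, 3n/4]$, the comparison $S[i]$ vs.\ $S[i+p]$ reduces to $x[j + (n/4-p)]$ vs.\ $x[j]$ for $j \in [1, p]$, forcing $\HAM{x[1,p], x[n/4-p+1, n/4]} \le k$. This is exactly the statement that $x$, and therefore (by the same triangle-inequality transfer) $\nu$, has a small-mismatch shift by $q := n/4 - p$. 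Since $p + q = n/4$, at least one of $p$ or $q$ is at most $n/8$, so the block-counting lower bound for $\nu$ applies to whichever of $\{p, q\}$ falls into the small-shift regime (either directly when it is below $\sqrt{n}/4$, or via a finer parity argument on block-boundary crossings for the intermediate range), producing the required contradiction.
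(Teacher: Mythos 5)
Your architecture is genuinely different from the paper's. The paper compares $S$ to the idealized string $T=\nu\circ\nu\circ x\circ x$, argues that for every shift $p<\frac{n}{4}$ each block group $1^{2j}0^{2j}1^{2j+1}0^{2j+1}$ of the second $\nu$ must contain a mismatch (because a maximal run of exactly $2j+1$ ones occurs at a unique position), and then transfers the $\Omega(\sqrt{n})$ mismatches from $T$ back to $S$ at a cost of $O(k)$. You instead work entirely inside the suffix $x\circ x$ and the $X_3\to X_4$ junction to deduce that $x$ itself (as a length-$\frac{n}{4}$ string) has $k$-period $p$ \emph{and} $k$-period $q=\frac{n}{4}-p$, transfer both to $\nu$ by the triangle inequality, and reduce to showing $\nu$ has no $2k$-period $r\le\frac{n}{8}$. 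Those reductions are all correct (and cleanly avoid any dependence on $y$), and the resulting statement about $\nu$ is true.

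The gap is in the last step, for shifts in the intermediate range $\frac{\sqrt{n}}{4}<r\le\frac{n}{8}$, which you explicitly defer to ``a finer parity argument on block-boundary crossings.'' As sketched, that argument does not give a strong enough bound. Writing $\beta(z)$ for the index of the block of $\nu$ containing position $z$, a mismatch $\nu[z]\neq\nu[z+r]$ occurs exactly when $f(z)=\beta(z+r)-\beta(z)$ is odd; $f$ changes by at most $1$ per step and drifts from roughly $2\sqrt{r}$ down to roughly $2r/\sqrt{n}$ over the window, so the intermediate-value count only guarantees on the order of $\sqrt{r}-r/\sqrt{n}$ odd values being attained. At $r\approx\frac{\sqrt{n}}{4}$ this is only $\Theta(n^{1/4})$ guaranteed mismatches, which is smaller than $2k$ for, say, $k=n^{0.4}=o(\sqrt{n})$; ruling out that $f$ lingers almost entirely on even values requires a genuinely finer analysis. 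The clean fix is the paper's uniqueness trick: for each $j$ such that the segment $01^{2j+1}0$ lies inside $[r+1,\frac{n}{4}]$, a mismatch-free shift by $r$ would place a second maximal run of exactly $2j+1$ ones elsewhere in $\nu$, which is impossible; this yields $\Omega(\sqrt{n})>2k$ mismatches uniformly for all $1\le r<\frac{n}{4}$ and closes your argument. Without some such ingredient, the hardest case of the lemma remains unproven.
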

\begin{proof}
We show that stronger result that if $p<\frac{n}{4}$, $k>2$, and $n>4(18k+1)(18k+2)$, then $|\{z|S[z]\neq S[z+p]\}|>\sqrt{\frac{n}{8}}>k$, for $k=o(\sqrt{n})$. 

Let $T=\nu\circ\nu\circ x\circ x$ and for each $z$, consider $T[z]$ and $T[z+p]$.
For each $j>0$, some position $z+p$ in $1^{2j}0^{2j}1^{2j+1}0^{2j+1}$ in the second $\nu$ corresponds with a mismatch in $z$. 
Since $\HAM{x,\nu}=\frac{k}{2}$ and $\HAM{x,y}\le\frac{k}{2}+1$, then $\HAM{S\left[1,\frac{n}{2}\right],T\left[1,\frac{n}{2}\right]}\le\frac{3k}{2}+1$. 
Each mismatch between $S$ and $T$ can cause at most two indices $z$ for which $T[z]\neq T[z+p]$ but $S[z]=S[z+p]$. 
Thus, by setting $j=6k>2\left(\frac{3k}{2}+1\right)+2k$, we have that for $\frac{n}{4}>(12k+1)(12k+2)$, there are at least $6k$ indices $z$ for which $T[z]\neq T[z+p]$, and thus at least $2k$ indices for which $S[z]\neq S[z+p]$.
\end{proof}

\begin{corollary}
If $\HAM{x,y}=\frac{k}{2}$, then the string $S(x,y)=x\circ y\circ x\circ x$ has period $\frac{n}{4}$. 
On the other hand, if $\HAM{x,y}=\frac{k}{2}+1$, then $S(x,y)$ has period greater than $\frac{n}{4}$.
\end{corollary}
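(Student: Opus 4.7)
The plan is to directly compute the Hamming distance between the length-$\frac{3n}{4}$ prefix and the length-$\frac{3n}{4}$ suffix of $S(x,y)$ under the candidate period $p=\frac{n}{4}$, and then combine the outcome with \lemref{lem:nosmall} to rule out all smaller candidates.

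First I would align $S[1,3n/4]$ against $S[n/4+1,n]$. Since $S(x,y)=x\circ y\circ x\circ x$ with $|x|=|y|=\frac{n}{4}$, the prefix decomposes as $x\circ y\circ x$ and the suffix as $y\circ x\circ x$. Matching these block-by-block yields
\[
\HAM{S[1,3n/4],S[n/4+1,n]}=\HAM{x,y}+\HAM{y,x}+\HAM{x,x}=2\HAM{x,y}.
\]
In the first case, $\HAM{x,y}=\frac{k}{2}$, so this sum is exactly $k$, meaning $\frac{n}{4}$ is a $k$-period. In the second case, $\HAM{x,y}=\frac{k}{2}+1$, so this sum is $k+2>k$, meaning $\frac{n}{4}$ is \emph{not} a $k$-period.

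Second, I would invoke \lemref{lem:nosmall}: for $k=o(\sqrt{n})$, every $k$-period of $S(x,y)$ is at least $\frac{n}{4}$. In the first case this combines with the computation above to show $\frac{n}{4}$ is the smallest $k$-period. In the second case, since $\frac{n}{4}$ itself fails and every $k$-period is $\ge\frac{n}{4}$, the smallest $k$-period must be strictly greater than $\frac{n}{4}$.

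There is essentially no hard step here; the only mild subtlety is keeping track of which direction of the reduction needs \lemref{lem:nosmall}. The conceptually important point, which I would emphasize but not belabor in the proof, is that the ``$2\HAM{x,y}$'' computation sharply separates the two promise cases by a gap of $2$, which is exactly what makes the corollary usable as a reduction gadget for the $\Omega(k\log n)$ lower bound via \thmref{thm:lb:mem}.
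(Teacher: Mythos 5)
Your proof is correct and matches the paper's intended reasoning (the paper states this corollary without an explicit proof, as it follows immediately from \lemref{lem:nosmall} together with exactly the block-by-block computation you give, namely $\HAM{S[1,3n/4],S[n/4+1,n]}=\HAM{x,y}+\HAM{y,x}+\HAM{x,x}=2\HAM{x,y}$, which equals $k$ in the first case and $k+2>k$ in the second).
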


\begin{theorem}
For $k=o(\sqrt{n})$ with $k>2$, any one-pass streaming algorithm which computes the smallest $k$-period of an input string $S$ with probability at least $1-\frac{1}{n}$ requires $\Omega(k\log n)$ space, even under the promise that the $k$-period is at most $\frac{n}{2}$.
\end{theorem}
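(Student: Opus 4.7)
The plan is to invoke Yao's minimax principle \cite{Yao77} to reduce to a deterministic lower bound, then exhibit a short reduction from the distinguishing task established in \thmref{thm:lb:mem}. Specifically, I would sample $(x,y)$ uniformly from $(X, Y_x)$ (the hard distribution of \thmref{thm:lb:mem}) and feed the streaming algorithm the length-$n$ string $S(x,y) = x \circ y \circ x \circ x$. The corollary preceding the theorem already shows that the smallest $k$-period of $S(x,y)$ equals $n/4$ exactly when $\HAM{x,y} = k/2$, and strictly exceeds $n/4$ when $\HAM{x,y} = k/2+1$. Thus any algorithm that outputs the smallest $k$-period of $S(x,y)$ also solves the distinguishing task of \thmref{thm:lb:mem} with the same success probability.

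To legitimately invoke the promise in the theorem statement, I must verify that the smallest $k$-period of $S(x,y)$ does not exceed $n/2$ in either case. A direct Hamming-distance calculation gives $\HAM{S[1,n/2], S[n/2+1,n]} = \HAM{x\circ y,\, x\circ x} = \HAM{x,y}$, which is at most $k/2 + 1 \le k$ (using $k>2$). Hence $n/2$ is always a $k$-period of $S(x,y)$, so the smallest $k$-period never exceeds $n/2$, and every input produced by the reduction respects the promise.

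Chaining these observations together, by Yao's principle any randomized one-pass algorithm that succeeds with probability at least $1-1/n$ on every promise-respecting input fixes some setting of its random bits which yields a deterministic algorithm succeeding with probability at least $1-1/n$ on the above distribution. By \thmref{thm:lb:mem} such a deterministic algorithm requires at least $\frac{k\log n}{6} = \Omega(k \log n)$ bits of memory, completing the lower bound. The main obstacle, and essentially the only step beyond routine bookkeeping, is the promise check: \lemref{lem:nosmall} only guarantees that the smallest $k$-period is at least $n/4$, so an independent upper-bound argument is needed to keep it below $n/2$, which is exactly what the short Hamming-distance computation above supplies.
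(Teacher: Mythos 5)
Your proof is correct and takes essentially the same route as the paper: Yao's minimax principle combined with the reduction from the distinguishing task of \thmref{thm:lb:mem} on the hard instance $S(x,y)=x\circ y\circ x\circ x$, using \lemref{lem:nosmall} and its corollary to tie the smallest $k$-period to the value of $\HAM{x,y}$. Your explicit check that $n/2$ is always a $k$-period of $S(x,y)$ (so every input in the hard distribution respects the promise) is a detail the paper leaves implicit, and it is a worthwhile addition rather than a departure from the argument.
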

\begin{proof}
By \thmref{thm:lb:mem}, any algorithm using less than $\frac{k\log n}{6}$ bits of memory cannot distinguish between $\HAM{x,y}=\frac{k}{2}$ and $\HAM{x,y}=\frac{k}{2}+1$ with probability at least $1-1/n$. 
Thus, no algorithm can distinguish whether the period of $S(x,y)$ is $\frac{n}{4}$ with probability at least $1-1/n$ while using less than $\frac{k\log n}{6}$ bits of memory.
\end{proof}
\def\shortbib{0}
\bibliographystyle{alpha}
\bibliography{references}
\end{document}